\long\def\ca#1\cb{} 
\newcommand{\ket}[1]{|#1\rangle}               
\newcommand{\colo}{\,\hbox{:}\,}              
\newcommand{\bra}[1]{\langle #1|}              
\newcommand{\dya}[1]{\ket{#1}\!\bra{#1}}
\newcommand{\dyad}[2]{\ket{#1}\!\bra{#2}}        
\newcommand{\ip}[2]{\langle #1|#2\rangle}      
\newcommand{\fid}{\text{fid}}
\newcommand{\Sbb}{\mathbb{S}}
\newcommand{\EC}{\mathcal{E}}
\newcommand{\HC}{\mathcal{H}}
\newcommand{\IC}{\mathcal{I}}
\newcommand{\RC}{\mathcal{R}}
\newcommand{\Tr}{{\rm Tr}}
\renewcommand{\geq}{\geqslant}
\renewcommand{\leq}{\leqslant}
\newcommand{\mte}[2]{\langle#1|#2|#1\rangle }
\newcommand{\ot}{\otimes}
\newcommand{\ad}{^\dagger}
\newcommand*{\id}{\openone}
\newcommand{\al}{\alpha }
\newcommand{\bt}{\beta }
\newcommand{\lm}{\lambda }
\newcommand{\Lm}{\Lambda }
\newcommand{\sg}{\sigma }
\newcommand{\om}{\omega }
\newcommand{\ddd}{d }
\newtheoremstyle{example}{\topsep}{\topsep}%
{}
{}
{\bfseries}
{.}
{   }
{\thmname{#1}\thmnumber{ #2}}
\theoremstyle{example}
\newtheorem{theorem}{Theorem}
\newtheorem{lemma}[theorem]{Lemma}
\newtheorem{corollary}[theorem]{Corollary}
\newtheorem{proposition}[theorem]{Proposition}
\theoremstyle{definition}
\begin{document}

\title{Complementary sequential measurements generate entanglement}

\author{Patrick J. Coles}
\affiliation{Centre for Quantum Technologies, National University of Singapore, 2 Science Drive 3, 117543 Singapore}

\author{Marco Piani}
\affiliation{Institute for Quantum Computing and Department of Physics and Astronomy, University of Waterloo, N2L3G1 Waterloo, Ontario, Canada}

\begin{abstract}
We present a new paradigm for capturing the complementarity of two observables. It is based on the entanglement created by the interaction between the system observed and the two measurement devices used to measure the observables sequentially. {Our main result is a lower bound on this entanglement and resembles well-known entropic uncertainty relations.} Besides its fundamental interest, this result directly bounds the effectiveness of sequential bipartite operations---corresponding to the measurement interactions---for entanglement generation. We further discuss the intimate connection of our result with two primitives of information processing, namely, decoupling and coherent teleportation.
\end{abstract}

\pacs{03.67.-a, 03.67.Hk}

\maketitle

Heisenberg's uncertainty principle~\cite{price1977uncertainty} tries to capture one of the fundamental traits of quantum mechanics: the complementarity of observables like position and momentum. There are several variants of the principle which may be considered conceptually very different~\cite{ozawa_pra_2003}. For example, one can consider the uncertainty related to the independent measurement of two observables, with the measurements performed on two \emph{independent but identically prepared} quantum systems.  In this scenario, the uncertainty principle for complementary observables can be understood as stating that there is an \emph{unavoidable uncertainty} about the outcomes of the associated measurements. Alternatively, one can consider the \emph{sequential} measurement of such two observables, performed on the \emph{same} physical system. In this case, the uncertainty principle is  understood as the \emph{unavoidable disturbance} on the second observable due to the measurement of the first. Although this latter disturbance-based interpretation of the principle is the one originally considered by Heisenberg in his famous $\gamma$-ray thought experiment~\cite{Heisenberg}, researchers have more often focussed on the first scenario.

Unavoidable uncertainty was stated quantitatively by Kennard~\cite{kennard1927quantum} and Robertson~\cite{Robertson} in the famous uncertainty relation involving standard deviations. Since then, uncertainty relations have been cast in information-theoretic terms \cite{EURreview1}. For example, a well-known \emph{entropic uncertainty relation} is that of Maassen and Uffink \cite{MaassenUffink}. Working in finite dimensions, they consider two orthonormal bases $\{ \ket{X_j} \}$ and $\{ \ket{Z_k} \}$ for the Hilbert space $\HC_S$ of a quantum system $S$, to which one can associate observables $X$ and $Z$, respectively. For any state $\rho_S$, they find
\begin{equation}
H(X)+H(Z)\geq \log (1/c),
\label{eqn1}
\end{equation}
where $H(X):=-\sum_j p(X_j) \log p(X_j)$ is the Shannon entropy associated with the probability distribution $p(X_j):=\bra{X_j} \rho_S \ket{X_j}$ (similarly for $H(Z)$), logarithms are taken in base $2$, and $c := \max_{j,k} |\ip{X_j}{Z_k}|^2 $ quantifies the complementarity between the $X$ and $Z$ observables. The r.h.s.\ of \eqref{eqn1} vanishes when $X$ and $Z$ share an eigenstate. At the other extreme, when $X$ and $Z$ are complementary---so-called mutually unbiased bases (MUBs) with $|\ip{X_j}{Z_k}|^2 = 1/\ddd $, $\forall j,k$, and $\ddd  = \dim (\HC_S)$---the r.h.s.\ becomes $\log \ddd $. In the latter case, Eq.~(\ref{eqn1}) implies that when our uncertainty about $X$ approaches zero, our uncertainty about $Z$ must approach its maximum value $\log \ddd $.

In this Letter, we offer a novel view on what complementarity entails by relating it to another fundamental trait of quantum mechanics: \emph{entanglement}~\cite{HHHH09}. In~\cite{ColesCollapsePRA2012} it was already proved that  an entropic uncertainty relation like (\ref{eqn1}) has a correspondent \emph{entanglement certainty relation}. In more detail, Ref.~\cite{ColesCollapsePRA2012} considers the generation of entanglement between measurement devices and independent, although identically prepared, copies of some physical system, and proves that, when dealing with complementary observables, there is \emph{unavoidable creation of entanglement} between at least one copy of the system and one measuring device. Here, as Heisenberg did originally, we instead consider sequential measurements performed on the same physical system, rather than  independent copies of the system; on the other hand, following~\cite{StrKamBru11,PianiEtAl11,GharEtAl2011,PianiAdessoPRA.85.040301,ColesCollapsePRA2012}, we still focus on the entanglement generated between the system and the measurement devices. In general, for any $X$ and $Z$, we can lower-bound the entanglement $E(X,Z)$ between the system and the measurement devices created from sequentially measuring $X$ and $Z$ with
 \begin{equation}
\label{eqn2}
E(X,Z) \geq \log (1/c),
\end{equation}
where the $c$ factor appearing here is \textit{precisely the same $c$ appearing in Eq.~\eqref{eqn1}}, and we provide more details on how we quantify entanglement in the following.

Besides the fact that our approach connects in a fundamental way two basic properties of quantum mechanics, complementarity---in the sequential-measurement scenario---and entanglement, our results have also direct operational interpretations. On one hand, they provide bounds on the usefulness of sequential bipartite operations---corresponding to the measurement interactions---for entanglement generation. On the other hand, we argue below that our analysis is directly linked to the quantum information processing primitives of decoupling~\cite{SchuWestErrCorr, DupEtAl2010, DupuisThesis, GroPopWinPRA72.032317, BuscemiNJP2009} and coherent teleportation~\cite{Brassard1998,HarrowPRL2004}. 

\bigskip

\emph{Setup.---}The basic setup corresponding to our main result is given in Fig.~\ref{fgr1}. The system is initially described by some arbitrary density operator $\rho^{(0)}_S$. It first interacts with a device $M_1$ meant to measure the observable $X$. We depict this interaction with the controlled-NOT (CNOT) symbol, although more generally it represents a controlled-shift unitary, $U_X = \sum_j [X_j] \ot \Sbb^j \ot \id_{M_2}$, acting on the tripartite Hilbert space $\HC_{SM_1M_2}$, where $\Sbb = \sum_k \dyad{k+1}{k}$ is the shift operator and $[X_j]$ is a shorthand notation for the dyad $\dya{X_j}$. This is a unitary model for the measurement process \cite{ZurekReview}. After this, the system interacts with a second device $M_2$, which measures the $Z$ observable; the unitary is given by $U_Z = \sum_j [Z_j] \ot \id_{M_1} \ot \Sbb^j$. We suppose that both $M_1$ and $M_2$ are initially in the $\ket{0}$ state, although later in the article we consider the effect of relaxing this assumption. We denote the states at times $t_0$, $t_1$, and $t_2$ in Fig.~\ref{fgr1} as $\rho^{(0)}_{SM_1M_2}$, $\rho^{(1)}_{SM_1M_2}$, and $\rho^{(2)}_{SM_1M_2}$, respectively.

\bigskip

\emph{Entanglement generation.---}We focus on the bipartite entanglement $E(X,Z)$ between $S$ and the joint system $M_1M_2$ present in the final state
\[
\rho^{(2)}_{SM_1M_2} = \sum_{j,k,l,m} [Z_l][X_j]\rho^{(0)}_S[X_k][Z_m] \ot \dyad{j}{k}\ot \dyad{l}{m}.
\]
For concreteness we consider $E$ to be the distillable entanglement~\cite{HHHH09}, i.e., the optimal rate for distilling Einstein-Podolsky-Rosen (EPR) pairs $(\ket{0}\ket{0}+\ket{1}\ket{1})/\sqrt{2}$ using local operations and classical communication (LOCC) in the asymptotic limit of infinitely many copies of the state. However, our result holds for several other entanglement measures, because distillable entanglement is itself a lower bound for such measures~\cite{HHHH09}.

Consider first the case where $X$ and $Z$ are MUBs. In this case, $\rho^{(2)}_{SM_1M_2}$ is maximally entangled across the $S$:$M_1M_2$ cut, \textit{regardless of the system's initial state $\rho^{(0)}_S$}. One can see this by noting that, if we choose the LOCC operation that measures  $M_1$ in the standard basis and communicates the result to the party holding $S$, the resulting conditional pure state on $SM_2$ is, up to an irrelevant local change of basis, a maximally entangled e-dit of the form $\sum_{i=0}^{\ddd -1}\ket{i}\ket{i}/\sqrt{\ddd }$. Alternatively, and more elegantly, we can factor out a maximally entangled state simply by performing a local unitary on $M_1M_2$; more precisely, the following holds.
\begin{proposition}
\label{prp1}
Let $X$ and $Z$ be MUBs. Define $H_{M_1} = \sum_j \dyad{X_j}{j}$ and the controlled unitary $U_{M_1M_2} = \sum_j \sg_X^j \ot [j]$, where $\sg_X^j := \sqrt{\ddd } \sum_k \ip{X_k}{Z_j} [X_k]$. Then
\begin{equation}
U_{M_1M_2} H_{M_1} \rho^{(2)}_{SM_1M_2} H_{M_1}\ad U_{M_1M_2}\ad = [\Phi]_{SM_2} \ot (\rho^{(0)}_S)_{M_1},
\end{equation} 
with $\ket{\Phi}=(\sum_j \ket{Z_j}\ket{j})\sqrt{\ddd }$: the local unitary $U_{M_1M_2} H_{M_1}$ applied to $\rho^{(2)}_{SM_1M_2}$ leaves $M_1$ in the system's initial state $\rho^{(0)}_S$, and $SM_2$ maximally entangled.
\end{proposition}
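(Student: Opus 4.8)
\emph{Proof strategy.} The plan is to rewrite $\rho^{(2)}_{SM_1M_2}$ in a ``phases times dyads'' form and then simply push the two local unitaries through it. First I would resolve the operator products in $\rho^{(2)}_{SM_1M_2}$: using $[X_j]\rho^{(0)}_S[X_k] = \matl{X_j}{\rho^{(0)}_S}{X_k}\dyad{X_j}{X_k}$ and then $[Z_l]\,\dyad{X_j}{X_k}\,[Z_m] = \ip{Z_l}{X_j}\ip{X_k}{Z_m}\dyad{Z_l}{Z_m}$, so that
\[
\rho^{(2)}_{SM_1M_2} = \sum_{j,k,l,m} \matl{X_j}{\rho^{(0)}_S}{X_k}\,\overline{\ip{X_j}{Z_l}}\,\ip{X_k}{Z_m}\;\dyad{Z_l}{Z_m}_S\ot\dyad{j}{k}_{M_1}\ot\dyad{l}{m}_{M_2},
\]
where I have used $\ip{Z_l}{X_j}=\overline{\ip{X_j}{Z_l}}$. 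Since $\{\ket{X_j}\}$ is orthonormal, $H_{M_1}$ is unitary, and conjugation by it turns $\dyad{j}{k}_{M_1}$ into $\dyad{X_j}{X_k}_{M_1}$ without touching any scalar.

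Next I would apply $U_{M_1M_2}$. Here I would first record that $U_{M_1M_2}$ is unitary \emph{precisely} because $X$ and $Z$ are MUBs: each $\sg_X^j$ is diagonal in the $X$ basis with entries $\sqrt{\ddd}\,\ip{X_k}{Z_j}$, which have modulus one iff $|\ip{X_k}{Z_j}|^2 = 1/\ddd$. Because the $M_2$ register of $H_{M_1}\rho^{(2)}_{SM_1M_2}H_{M_1}\ad$ carries $\dyad{l}{m}$, conjugating by the controlled unitary $U_{M_1M_2}=\sum_n\sg_X^n\ot[n]$ applies $\sg_X^l$ on the left and $(\sg_X^m)\ad$ on the right of the $M_1$ factor $\dyad{X_j}{X_k}_{M_1}$; since $\sg_X^l\ket{X_j}=\sqrt{\ddd}\,\ip{X_j}{Z_l}\ket{X_j}$, this reproduces $\dyad{X_j}{X_k}_{M_1}$ up to the scalar $\ddd\,\ip{X_j}{Z_l}\,\overline{\ip{X_k}{Z_m}}$, with $S$ and $M_2$ left as spectators.

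Collecting scalars, the coefficient of $\dyad{Z_l}{Z_m}_S\ot\dyad{X_j}{X_k}_{M_1}\ot\dyad{l}{m}_{M_2}$ becomes $\matl{X_j}{\rho^{(0)}_S}{X_k}\cdot\ddd\,|\ip{X_j}{Z_l}|^2\,|\ip{X_k}{Z_m}|^2 = \matl{X_j}{\rho^{(0)}_S}{X_k}/\ddd$ by the MUB condition. The $(l,m)$- and $(j,k)$-summations then decouple, and the state equals $\big(\tfrac{1}{\ddd}\sum_{l,m}\dyad{Z_l}{Z_m}_S\ot\dyad{l}{m}_{M_2}\big)\ot\big(\sum_{j,k}\matl{X_j}{\rho^{(0)}_S}{X_k}\dyad{X_j}{X_k}_{M_1}\big)$: the first factor is $[\Phi]_{SM_2}$ with $\ket{\Phi}$ the normalized maximally entangled state $\propto\sum_l\ket{Z_l}\ket{l}$, and the second is $\rho^{(0)}_S$ expanded in the $X$ basis, now regarded as a state of $M_1$. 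This is the claimed identity; and since $U_{M_1M_2}H_{M_1}$ acts as the identity on $S$, it is a local unitary for the $S$:$M_1M_2$ cut, so $\rho^{(2)}_{SM_1M_2}$ is maximally entangled across it. The only delicate point is the phase bookkeeping---making sure the overlaps produced by resolving $\rho^{(2)}_{SM_1M_2}$ cancel in modulus against those produced by $\sg_X$; the MUB hypothesis is exactly what turns that cancellation into the constant $1/\ddd$, and everything else is routine.
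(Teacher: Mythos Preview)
Your verification is correct. The paper states Proposition~\ref{prp1} without proof, treating it as a direct computation; your approach---expanding $\rho^{(2)}_{SM_1M_2}$ into overlaps times dyads, pushing $H_{M_1}$ and then $U_{M_1M_2}$ through, and using the MUB condition $|\ip{X_j}{Z_l}|^2=1/\ddd$ to collapse the phase products to the constant $1/\ddd$---is exactly the intended straightforward check, and your phase bookkeeping is right (note also that the paper's $\ket{\Phi}=(\sum_j\ket{Z_j}\ket{j})\sqrt{\ddd}$ is a typo for $/\sqrt{\ddd}$, which you handle correctly).
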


\begin{figure}[t]
\begin{center}
\includegraphics[width=2in]{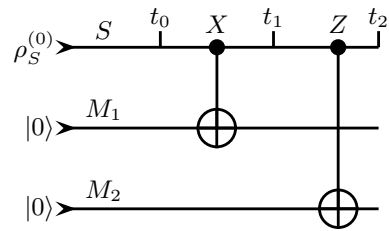}
\caption{Circuit diagram for the sequential measurement of the $X$ and $Z$ observables on system $S$. 
\label{fgr1}}
\end{center}
\end{figure}

Thus, in the case of MUBs, we can identify several tasks that are accomplished by sequentially measuring $X$ and $Z$ as in Fig.~\ref{fgr1}. Besides producing maximal entanglement, the state $\rho^{(0)}_S$ is ``teleported'' from the system to the measurement devices. Indeed, the protocol we have described above is commonly known as \emph{coherent teleportation}~\cite{Brassard1998,HarrowPRL2004}. Furthermore, since $S$ is maximally entangled to $M_1M_2$ at the end of the protocol, then, by the monogamy principle~\cite{seevinck2010}, $S$ must be completely uncorrelated with any other system $S'$. The procedure of performing an operation on $S$ to destroy its potential correlations with $S'$ is  known as \emph{decoupling}~\cite{SchuWestErrCorr, DupEtAl2010, DupuisThesis, GroPopWinPRA72.032317, BuscemiNJP2009}. Our main contribution is to extend the above discussion to the case where $X$ and $Z$ have partial complementarity ($c>1/\ddd $): \emph{Can we still create entanglement, coherently teleport, and decouple even if $X$ and $Z$ are not MUBs, and if so, to what degree?} 

Our main result \eqref{eqn2}, says that, as soon as  there is partial complementarity between $X$ and $Z$, some distillable entanglement is present in $\rho^{(2)}_{SM_1M_2}$.
\begin{theorem}
\label{thmMain230}
Let $E(X,Z)$ denote the distillable entanglement between $S$ and $M_1M_2$ at time $t_2$ in Fig.~\ref{fgr1}. Then \eqref{eqn2} holds.
\end{theorem}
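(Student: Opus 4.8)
The plan is to produce an explicit one-way LOCC protocol acting on $\rho^{(2)}_{SM_1M_2}$ whose EPR-distillation rate is provably at least $\log(1/c)$; since $E(X,Z)=E_D(\rho^{(2)}_{SM_1M_2})$ is the supremum of all such rates, this establishes \eqref{eqn2}. The protocol is the one already sketched above in the MUB discussion (and exactly factorizing in Proposition~\ref{prp1}): the party holding $M_1M_2$ measures $M_1$ in the standard basis $\{\ket{j}\}$ and broadcasts the outcome to the party holding $S$. The computational heart of the argument is to check that, feeding the collapse identity $[X_j]\rho^{(0)}_S[X_j]=p(X_j)[X_j]$ (with $p(X_j)=\bra{X_j}\rho^{(0)}_S\ket{X_j}$) into the displayed form of $\rho^{(2)}_{SM_1M_2}$, outcome $j$ occurs with probability $p(X_j)$ and leaves $SM_2$ in the \emph{pure} state $\ket{\phi_j}=\sum_l \ip{Z_l}{X_j}\,\ket{Z_l}_S\ket{l}_{M_2}$, \emph{irrespective of how mixed $\rho^{(0)}_S$ is}: the measurement on $M_1$ washes out all off-diagonal content of $\rho^{(0)}_S$. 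Across the $S$:$M_2$ cut, $\ket{\phi_j}$ has Schmidt coefficients $\{|\ip{Z_l}{X_j}|^2\}_l$.

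First I would cash this in. After the measurement the two parties share an ensemble $\{p(X_j),\dya{\phi_j}\}$ of \emph{known} pure states; applying standard pure-state entanglement concentration blockwise to the $\approx n\,p(X_j)$ copies carrying label $j$ and summing, they extract EPR pairs at asymptotic rate $\sum_j p(X_j)\,S(\Tr_{M_2}\dya{\phi_j})=\sum_j p(X_j)\,H(\{|\ip{Z_l}{X_j}|^2\}_l)$. (Equivalently: the classical--quantum state produced by the $M_1$-measurement has coherent information $I(S\rangle M_1M_2)$ exactly equal to this number, and monotonicity of $E_D$ under that local measurement together with the hashing bound gives the same conclusion.) Either way,
\[
E(X,Z)\ \geq\ \sum_j p(X_j)\,H\!\big(\{|\ip{Z_l}{X_j}|^2\}_l\big)=H(Z|X),
\]
the conditional Shannon entropy for measuring $X$ and then $Z$ on $\rho^{(0)}_S$.

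It remains to bound the right-hand side below by $\log(1/c)$. For each fixed $j$ the probability vector $\{|\ip{Z_l}{X_j}|^2\}_l$ has largest entry at most $c=\max_{j,k}|\ip{X_j}{Z_k}|^2$, so its Shannon entropy dominates its min-entropy: $H(\{|\ip{Z_l}{X_j}|^2\}_l)\geq-\log\max_l|\ip{Z_l}{X_j}|^2\geq\log(1/c)$. Averaging over $j$ with the weights $p(X_j)$ (which sum to $1$) yields $E(X,Z)\geq\log(1/c)$, and the $c$ here is visibly the same $c$ as in \eqref{eqn1}; note the intermediate bound $E(X,Z)\geq H(Z|X)$ is even a touch stronger than what \eqref{eqn1} alone supplies.

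The one genuinely delicate step is the first: the slightly tedious bookkeeping showing that conditioning on the $M_1$ outcome leaves $SM_2$ pure with the stated Schmidt data for an arbitrary, possibly mixed, $\rho^{(0)}_S$, plus the care needed to turn the per-label concentration into a single well-defined asymptotic LOCC rate. Everything downstream---LOCC monotonicity of $E_D$, the pure-state concentration rate (or the hashing bound), and Shannon $\geq$ min-entropy---is standard. Relaxing the assumption that $M_1,M_2$ start in $\ket{0}$, or replacing $E_D$ by any entanglement measure it lower-bounds, requires no new ideas.
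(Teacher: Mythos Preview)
Your proof is correct and coincides with the paper's second proof approach: measure $M_1$ in the standard basis, observe that the conditional states on $SM_2$ are pure with Schmidt weights $\{|\ip{X_j}{Z_k}|^2\}_k$, and bound the resulting average entanglement entropy below by $\log(1/c)$ via $H\geq H_{\min}$---recovering the paper's intermediate inequality~\eqref{eqn4} along the way. The paper also records an alternative first proof via the uncertainty relation with quantum memory applied at time $t_1$, but your LOCC-monotonicity argument matches its second proof essentially line for line.
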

\begin{proof}
We give two alternative proofs. The first is based on the uncertainty principle with quantum memory \cite{BertaEtAl} and the second is based on the monotonicity of entanglement under LOCC~\cite{HHHH09}. The second proof approach yields a slightly stronger version of \eqref{eqn2}.

In the first approach we we apply the uncertainty principle with quantum memory \cite{BertaEtAl} at time $t_1$ (just after the $X$ measurement) to get:
\begin{equation}
\label{eqn3}
H(X | M_1M_2)_{\rho^{(1)}} + H(Z | S')_{\rho^{(1)}} \geq \log(1/c)
\end{equation}
where we let $S'$ purify the initial state $\rho^{(0)}_S$, and where the first and second terms in \eqref{eqn3} are the conditional entropies of $\rho^{(1)}_{XM_1M_2} := \sum_j [X_j]\rho^{(1)}_{SM_1M_2}[X_j]$ and $\rho^{(1)}_{ZS'} := \sum_k [Z_k]\rho^{(1)}_{SS'}[Z_k]$ respectively. The von Neumann conditional entropy of $\sg$ is defined as $H(A|B)_{\sg} := H(\sg_{AB})-H(\sg_B)$, with $H(\sg) = - \Tr(\sg \log \sg)$ the von Neumann entropy. Because $X$ was already measured by $M_1$, we have $H(X | M_1M_2)_{\rho^{(1)}}=0$. Also, from a result in \cite{ColesDecDisc2012, ColesCollapsePRA2012}, we have $H(Z|S')_{\rho^{(1)}} = E(X,Z)$, completing the proof.

In the second approach, we note that the final entanglement is larger than the average entanglement obtained from measuring $M_1$ in the standard basis followed by communicating the result to the party holding system $S$. That is, $E(X,Z)\geq \sum_j p_j H(\rho^{(2)}_{S,j})$, where we used that the conditional states associated with different measurement outcomes are bipartite pure states, $p_j \rho^{(2)}_{SM_2,j} = \Tr_{M_1}[(\id \ot \dya{j}\ot \id)\rho^{(2)}_{SM_1M_2}]$, hence their entanglement is the entropy of the reduced state $\rho^{(2)}_{S,j} = \Tr_{M_2} (\rho^{(2)}_{SM_2,j})$. We obtain
\begin{equation}
\label{eqn4}
E(X,Z)\geq \sum_j p_j H( \{ |\ip{X_j}{Z_k}|^2 \}_k ),
\end{equation}
where the entropy on the r.h.s.\ is the classical entropy of the set of overlaps obtained from varying the index $k$. Equation~\eqref{eqn4} is slightly more complicated than \eqref{eqn2} because it depends on the initial state through the probabilities $p_j = \mte{X_j}{\rho^{(0)}_S}$. On the other hand, it is slightly stronger, implying \eqref{eqn2} by noting that Shannon entropy upper-bounds the min-entropy $H_{\min}(\{ q_k \})= - \log \max_k q_k$, and averaging over $j$ in \eqref{eqn4} yields a larger value than minimizing over $j$, completing the proof.
\end{proof}

So, even for limited complementarity, the circuit in Fig.~\ref{fgr1} still generates entanglement ``efficiently''. Using our main result, we also prove below that decoupling and coherent teleportation are approximately achieved in the case of approximate complementarity. We further consider two generalizations of our results: to the case of mixed measurement devices, and to the case of an arbitrary number of sequential measurements.

\bigskip

\emph{Decoupling.---}Decoupling \cite{SchuWestErrCorr, DupEtAl2010, DupuisThesis, GroPopWinPRA72.032317, BuscemiNJP2009} consists in transforming an arbitrary bipartite state $\rho_{SS'}$ into some tensor product $\sg_S \ot \sg_{S'}$, and it has specific applications in state merging~\cite{horodecki2005partial} and quantum cryptography~\cite{RevModPhysCrypto}. Decoupling strategies often involve a local operation performed on system $S$ only. Note that the effect on $S$ of the circuit of Fig.~\ref{fgr1} is equivalent to a random unitary channel $\rho^{(0)}_S \mapsto  (1/\ddd ^2) \sum_{k,l}(\sg_Z^k\sg_X^l) \rho^{(0)}_S (\sg_Z^k\sg_X^l)\ad$, consisting of $\ddd ^2$ unitaries each of which is a product of generalized Pauli operators, $\sg_X = \sum_j \om^j \dya{X_j}$ and $\sg_Z = \sum_j \om^j \dya{Z_j}$ with $\om = e^{2\pi i / \ddd }$. It is well-known that when $X$ and $Z$ are MUBs  this results in $\rho^{(0)}_{SS'}\mapsto \rho^{(2)}_{SS'}=\id/\ddd  \ot \rho^{(2)}_{S'}$.

Can we guarantee approximate decoupling when $X$ and $Z$ exhibit only approximate complementarity? Because of monogamy of correlations, this question is closely related to the question of whether the $X$ and $Z$ measurements create entanglement \cite{BuscemiNJP2009}: if $S$ is highly entangled to $M_1M_2$, then it is almost completely decoupled from some other system $S'$.  Thus, \eqref{eqn2} must imply a corresponding decoupling result. To prove this, we consider the relative entropy distance $D(\sigma\|\tau):=\Tr(\sigma\log\sigma) - \Tr(\sigma\log\tau)$~\footnote{While the relative entropy operationally measures distinguishability \cite{ohya2004quantum}, it is, mathematically speaking, not a distance; for example, it is is not symmetric in $\sigma$ and $\tau$. Nonetheless, Pinsker's inequality relates it to the trace distance: $D(\sigma\|\tau)\geq  \|\sigma-\tau\|_1^2/(2\ln2)$, $\|X\|_1 = \Tr\sqrt{X^\dagger X}$.}.
We find the following.
\begin{corollary}
\label{cor2}
For any initial $\rho^{(0)}_{SS'}$, at time $t_2$
\begin{equation}
\label{eqn5}
D(\rho^{(2)}_{SS'} || \id /\ddd  \ot \rho^{(2)}_{S'}) \leq \log (\ddd \cdot c).
\end{equation}
\end{corollary}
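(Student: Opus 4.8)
The plan is to rewrite the left-hand side of \eqref{eqn5} as $\log\ddd$ minus a conditional von~Neumann entropy, recognize that entropy as the generated entanglement $E(X,Z)$, and then apply Theorem~\ref{thmMain230} directly. For the first step I would use that on the relevant support $\log(\id_S/\ddd\ot\rho^{(2)}_{S'}) = -(\log\ddd)\,\id + \id_S\ot\log\rho^{(2)}_{S'}$, and that $\supp\rho^{(2)}_{SS'}\subseteq\HC_S\ot\supp\rho^{(2)}_{S'}$, so the relative entropy is finite and a one-line computation gives
\[
D(\rho^{(2)}_{SS'}\,\|\,\id_S/\ddd\ot\rho^{(2)}_{S'}) = \log\ddd - H(S|S')_{\rho^{(2)}}.
\]
If $\rho^{(0)}_{SS'}$ is not already a purification of $\rho^{(0)}_S$, I would first purify with an ancilla $S''$, run the argument for $\rho^{(2)}_{SS'S''}$, and invoke monotonicity of $D$ under $\Tr_{S''}$ together with $\Tr_{S''}(\id_S/\ddd\ot\rho^{(2)}_{S'S''}) = \id_S/\ddd\ot\rho^{(2)}_{S'}$; hence without loss of generality $S'$ purifies $\rho^{(0)}_S$.

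Next I would identify the state $\rho^{(2)}_{SS'}$. Tracing out $M_2$ after the $Z$-measurement interaction is precisely the $Z$-dephasing channel $\rho\mapsto\sum_k[Z_k]\rho[Z_k]$ on $S$, so $\rho^{(2)}_{SS'} = \sum_k[Z_k]\rho^{(1)}_{SS'}[Z_k] = \rho^{(1)}_{ZS'}$, the same density operator whose conditional entropy appears in \eqref{eqn3}. Therefore $H(S|S')_{\rho^{(2)}} = H(Z|S')_{\rho^{(1)}}$, and by the collapse identity of~\cite{ColesDecDisc2012,ColesCollapsePRA2012} already used in the first proof of Theorem~\ref{thmMain230} this equals $E(X,Z)$. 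Combining with the display above and with \eqref{eqn2},
\[
D(\rho^{(2)}_{SS'}\,\|\,\id_S/\ddd\ot\rho^{(2)}_{S'}) = \log\ddd - E(X,Z) \leq \log\ddd - \log(1/c) = \log(\ddd\cdot c),
\]
which is \eqref{eqn5}. Equivalently one could bypass $E(X,Z)$ and bound $H(Z|S')_{\rho^{(1)}}\geq\log(1/c)$ straight from \eqref{eqn3}, since $H(X|M_1M_2)_{\rho^{(1)}}=0$.

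The calculation is short and I do not anticipate a genuine obstacle. The only points requiring care are the support/finiteness condition in the first step and the reduction to a purifying reference, together with keeping the collapse identity $H(Z|S')_{\rho^{(1)}}=E(X,Z)$ in exactly the form used for Theorem~\ref{thmMain230} rather than re-deriving it. Conceptually the corollary is just the observation that the relative-entropy distance of $\rho^{(2)}_{SS'}$ to the ideal decoupled state $\id_S/\ddd\ot\rho^{(2)}_{S'}$ equals $\log\ddd$ minus the entanglement generated between $S$ and $M_1M_2$, so that \eqref{eqn2} transcribes verbatim into \eqref{eqn5}, the bound vanishing exactly in the MUB case $c=1/\ddd$.
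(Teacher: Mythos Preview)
Your proof is correct and follows essentially the same strategy as the paper: rewrite the relative entropy as $\log\ddd - H(S|S')_{\rho^{(2)}}$, bound $H(S|S')_{\rho^{(2)}}$ below by $E(X,Z)$, and invoke \eqref{eqn2}. The only cosmetic difference is that the paper obtains $H(S|S')_{\rho^{(2)}}\geq E(X,Z)$ in one line from the identity $E(X,Z)=-H(S|M_1M_2)_{\rho^{(2)}}$ together with weak monotonicity (i.e., strong subadditivity), whereas you reach the same conclusion by first reducing to a purifying $S'$ via monotonicity of $D$ and then identifying $\rho^{(2)}_{SS'}=\rho^{(1)}_{ZS'}$ so as to use the dual form $E(X,Z)=H(Z|S')_{\rho^{(1)}}$ of that same identity.
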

\begin{proof}
The state $\rho^{(2)}_{SM_1M_2}$ falls into a class of states \cite{ColesCollapsePRA2012, CorDeOFan11} for which the distillable entanglement satisfies $E(X,Z) = - H(S | M_1M_2)_{\rho^{(2)}}$. Moreover, $H(S | M_1M_2)_{\rho^{(2)}}+ H(S|S')_{\rho^{(2)}}\geq 0$ because of strong subadditivity of entropy~\cite{NieChu00}. Finally, note that $\log \ddd  - H(S|S')_{\rho^{(2)}}$ is the relative entropy on the l.h.s.\ of \eqref{eqn5}.
\end{proof}
If $X$ and $Z$ are complementary, $c=1/\ddd $ and Corollary~\ref{cor2} implies $\rho^{(2)}_{\SS'}= \id /\ddd  \ot \rho^{(2)}_{S'}$. More generally, \eqref{eqn5} shows that $S$ and $S'$ are \textit{almost} decoupled if $X$ and $Z$ are \textit{almost} complementary.

\bigskip

\emph{Coherent teleportation.---}When $X$ and $Z$ are MUBs, Proposition~\ref{prp1} says that there exists a local unitary on $M_1M_2$ that recovers the input state $\rho^{(0)}_S$. As we decrease the complementarity between $X$ and $Z$, the channel $\EC \colo S(t_0)\to S(t_2)$ goes from the completely depolarizing channel to the dephasing channel (in the limit $X=Z$), while the complementary channel $\EC^c \colo  S(t_0)\to M_1M_2(t_2)$ goes from a perfect quantum channel to a dephasing channel. One can therefore consider the quantum capacity of $\EC^c$, i.e., the optimal rate at which $\EC^c$ allows for the reliable transmission of quantum information~\cite{wilde2011classical}, as a measure of the complementarity of $X$ and $Z$. We make these ideas quantitative in the following corollary. 
\begin{corollary}
\label{cor3}
The quantum capacity $Q(\EC^c)$ of the channel $\EC^c$ satisfies $Q(\EC^c) \geq \log (1/c)$.
Furthermore, there exists a recovery map $\RC$ such that the entanglement fidelity $F_e(\RC \circ \EC^c):=\Tr\Big([\Phi]_{SS'}(\RC \circ \EC^c)_{S}([\Phi]_{SS'})\Big)$ is lower-bounded by $F_e(\RC \circ \EC^c) \geq 1/(\ddd \cdot c)$.
\end{corollary}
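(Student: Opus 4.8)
The plan is to derive both statements from the entanglement bounds already established, using the standard dictionary between distillable entanglement of the Choi state and the coherent information / quantum capacity of the complementary channel. First I would recall that, since $\ket{\Phi}_{SS'}$ purifies the maximally mixed input, the state $\rho^{(2)}_{SM_1M_2}$ (with $S'$ adjoined) is, up to normalization and local isometry, exactly the output of the isometric dilation of $\EC$ applied to half of $\ket{\Phi}_{SS'}$: the $S$-branch is $\EC$ and the $M_1M_2$-branch is $\EC^c$. Hence the distillable entanglement across the $S:M_1M_2$ cut of $\rho^{(2)}$ coincides with $E(X,Z)$ from Theorem~\ref{thmMain230}, and by the argument already used in the proof of Corollary~\ref{cor2} this equals the (negative conditional entropy, i.e.) coherent information $I_c(\EC^c)$ evaluated on the maximally mixed input. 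Since the coherent information on a fixed input is a lower bound for the (single-letter, hence regularized) quantum capacity, $Q(\EC^c) \geq I_c(\EC^c) = E(X,Z) \geq \log(1/c)$, which is the first claim.

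For the fidelity statement I would invoke a one-shot decoupling-style bound: whenever the conditional entropy $H(S|M_1M_2)_{\rho^{(2)}}$ is negative, there exists a recovery map $\RC$ acting on $M_1M_2$ whose entanglement fidelity with respect to the maximally entangled $\ket{\Phi}_{SS'}$ is controlled by that conditional entropy. The cleanest route is to reuse the relative-entropy computation from Corollary~\ref{cor2}: we showed $\log\ddd - H(S|S')_{\rho^{(2)}} = D(\rho^{(2)}_{SS'}\|\id/\ddd\otimes\rho^{(2)}_{S'}) \leq \log(\ddd c)$, i.e. $H(S|S')_{\rho^{(2)}} \geq \log(1/c)$. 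Decoupling of $S$ from $S'$ by this amount is equivalent, via complementarity of channels, to correctability of $\EC^c$ up to the same parameter; quantitatively, the existence of a recovery $\RC$ with $F_e(\RC\circ\EC^c) \geq 2^{H(S|S')_{\rho^{(2)}}-\log\ddd}\cdot\ddd = 2^{H(S|S')_{\rho^{(2)}}}/1$... more carefully, one uses the known bound relating entanglement fidelity of the best recovery to the "decoupling" distance, which together with $D \leq \log(\ddd c)$ yields $F_e(\RC\circ\EC^c) \geq 1/(\ddd c)$.

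Concretely, the key steps in order are: (i) identify $\EC$ and $\EC^c$ as the two arms of the Stinespring dilation of the $S$-channel of Fig.~\ref{fgr1} acting on half of $\ket{\Phi}_{SS'}$, so that the Choi state of $\EC^c$ is (a local-unitary image of) $\rho^{(2)}_{SM_1M_2}$; (ii) invoke $E(X,Z) = -H(S|M_1M_2)_{\rho^{(2)}} = I_c(\EC^c)$ together with Theorem~\ref{thmMain230} to get $Q(\EC^c)\geq I_c(\EC^c)\geq\log(1/c)$; (iii) translate the relative-entropy bound of Corollary~\ref{cor2}, namely $D(\rho^{(2)}_{SS'}\|\id/\ddd\otimes\rho^{(2)}_{S'})\leq\log(\ddd c)$, into a statement about the optimal recovery of $\EC^c$, using the standard equivalence between approximate decoupling and approximate quantum error correction; (iv) read off $F_e(\RC\circ\EC^c)\geq 1/(\ddd c)$.

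The main obstacle is step (iii)–(iv): going from a relative-entropy (or conditional-entropy) decoupling bound to an explicit entanglement-fidelity bound for a concrete recovery map requires the right one-shot correctability lemma, and getting the constant to come out as exactly $1/(\ddd c)$ rather than a weaker function (e.g. involving a square root, as Pinsker would give) depends on using a tight, relative-entropy-based version of the recovery bound rather than a trace-distance detour. I would look for a direct argument: since $\rho^{(2)}_{SS'}$ is classical-quantum on $S$ in the $Z$-basis, the optimal recovery is explicit, and $F_e$ can be computed as a weighted sum of overlaps $|\ip{X_j}{Z_k}|^2$, whose minimization against the entropy bound reproduces the factor $\ddd c$ cleanly; verifying that this explicit $\RC$ achieves the claimed bound is the one calculation that actually needs to be carried out.
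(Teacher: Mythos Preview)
Your argument for the quantum-capacity bound is correct and essentially identical to the paper's: both take the maximally mixed input, identify $E(X,Z)=-H(S|M_1M_2)_{\rho^{(2)}}$ with the coherent information of $\EC^c$, and invoke Theorem~\ref{thmMain230}.

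For the entanglement-fidelity bound, however, there is a genuine gap. The paper does \emph{not} pass through the relative-entropy decoupling bound of Corollary~\ref{cor2}. Instead it uses the K\"onig--Renner--Schaffner operational characterization of the conditional min-entropy,
\[
2^{-H_{\min}(S'|M_1M_2)_{\rho^{(2)}}}=\ddd\cdot\max_{\RC}F_e(\RC\circ\EC^c),
\]
and then simply chains $-H_{\min}(S'|M_1M_2)\geq -H(S'|M_1M_2)=E(X,Z)\geq\log(1/c)$ to read off $F_e\geq 1/(\ddd c)$ exactly. Your proposed route---bound $D(\rho^{(2)}_{SS'}\|\id/\ddd\otimes\rho^{(2)}_{S'})$, then invoke ``approximate decoupling $\Leftrightarrow$ approximate correctability''---does not yield this constant cleanly: the standard decoupling-to-recovery conversions pass through Uhlmann's theorem and trace distance, which introduces square roots (as you yourself anticipate), and there is no relative-entropy version of the recovery lemma that outputs precisely $F_e\geq 2^{-D}/1$ in the form you need. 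Your fallback suggestion of computing $F_e$ for an explicit recovery by exploiting the classical-quantum structure of $\rho^{(2)}_{SS'}$ is aimed at the wrong marginal (you need a recovery acting on $M_1M_2$, not on $S'$) and is in any case not developed enough to constitute a proof. The missing ingredient is exactly the min-entropy identity above; once you have it, the fidelity bound is a two-line consequence of Theorem~\ref{thmMain230}.
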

\begin{proof}
Suppose $\rho^{(0)}_S=\id/\ddd =\Tr_{S'} [\Phi]_{SS'}$; then from \eqref{eqn2},
\begin{align}
\log(1/c)&\leq E(X,Z)\notag \\
&=-H(S|M_1M_2)_{\rho^{(2)}} \notag \\
&=H(\rho^{(2)}_{M_1M_2})-H(\rho^{(2)}_S),
\end{align}
where the second equality follows from $H(\rho^{(2)}_{SM_1M_2})=H(\rho^{(0)}_{S})=H(\rho^{(2)}_{S})$. The third line is a lower bound on the quantum capacity of the channel $\EC^c$~\cite{wilde2011classical}.

The proof of the second claim follows from the operational meaning of the conditional min-entropy~\cite{KonRenSch09}
$H_{\min}(A|B)_{\sg} = - \log [\dim(\HC_A) \max_{\RC}\bra{\Phi}(\IC \ot \RC)(\sg_{AB})\ket{\Phi}]$, where the $\max$ is over all completely-positive trace-preserving maps $\RC$, which gives $\max_{\RC} F_e(\RC \circ \EC^c) = (1/\ddd ) 2^{-H_{\min}(S'|M_1M_2)_{\rho^{(2)}}}$, where $S'$ purifies $\rho^{(0)}_S $. Finally note that $-H_{\min}(S'|M_1M_2)_{\rho^{(2)}} \geq -H(S'|M_1M_2)_{\rho^{(2)}} = E(X,Z)$.
\end{proof}

Corollary~\ref{cor3} allows us to say that we can \textit{approximately} teleport the state $\rho^{(0)}_S$ when $X$ and $Z$ are \textit{almost} MUBs. Conceptually, Corollary~\ref{cor3} follows from \eqref{eqn2} since the latter says that $S$ becomes highly entangled to $M_1M_2$, which implies that $\rho^{(2)}_S$ must be close to the maximally mixed state regardless of the input $\rho^{(0)}_S$, which implies that $\EC$ is a bad channel and hence the complementary channel $\EC^c$ must be good~\cite{continuitystinespring}. 

\bigskip

\emph{Initially mixed devices.---}In Fig.~\ref{fgr1}, we assumed the initial states of the measurement devices were pure, $\rho^{(0)}_{M_1}= \dya{0}$ and $\rho^{(0)}_{M_2}= \dya{0}$. We now focus on the effects of mixing. While we still assume that the system-device interaction takes place on a time scale on which coherence is preserved, it is natural to restrict our attention to the case where the device's initial state is diagonal in the basis---which we have taken as the standard basis---in which the measurement result is ``recorded'': off-diagonal elements in this basis typically correspond to macroscopic superpositions and are rapidly decohered~\cite{ZurekReview}. So we write $\rho^{(0)}_{M_1} = \sum_j \al_j \dya{j}$ and $\rho^{(0)}_{M_2} = \sum_j \bt_j \dya{j}$, with $\{\alpha_j\}$ and $\{\beta_j\}$ normalized probability distributions.

For a single measurement, the effect of mixing is to reduce the ability of the device to ``accept" information \cite{VedralPRL2003}. Thus, one expects mixing to adversely affect the creation of entanglement in our setup. However, as proven in the Appendix \footnote{See the Appendix.}, we find that limited mixing only partially hinders entanglement creation. We have the following simple bound that generalizes Eq.~\eqref{eqn2} to the case of mixed devices
\begin{equation}
\label{eqn6}
E(X,Z) \geq \log (1/c) - [H(\rho^{(0)}_{M_1}) + H(\rho^{(0)}_{M_2})].
\end{equation}

For decoupling, \eqref{eqn5} will of course still hold in the case of initially mixed devices, since $\rho^{(2)}_{SS'}$ is the same regardless of whether $\rho^{(0)}_{M_1}$ and $\rho^{(0)}_{M_2}$ are mixed. For coherent teleportation, Corollary~\ref{cor3} generalizes in a simple way \footnotemark[\value{footnote}]; for example, we find
\begin{equation}
\label{eqn7}
Q(\EC^c) \geq \log (1/c) - [H(\rho^{(0)}_{M_1}) + H(\rho^{(0)}_{M_2})].
\end{equation}

\bigskip

\emph{More than two measurements.---}Our main result can be generalized in a different way. Instead of two measurements, we may consider $n\geq2$ measurements. Suppose then, that system $S$ interacts sequentially with $n$ measurement devices, each initialized in $\ket{0}$. Time $t_m$ corresponds to the time immediately after the $m$-th measurement device $M_m$, which measures observable $X^m$ of $S$, has interacted with $S$. We are interested in the entanglement at time $t_n$ between $S$ and the measurement devices $M_1\ldots M_n$, denoted $E(X^1,\ldots,X^n)$. One could also consider the entanglement at some prior time $t_m < t_n$; however, this will always be smaller than the entanglement at time $t_n$, because
\begin{equation}
\label{eqn8}
E(X^1,\ldots,X^n) \geq E(X^1,\ldots,X^{n-1}).
\end{equation}
The proof of \eqref{eqn8} notes that each measurement can be thought of as a random-unitary channel acting on $S$, where the information about which unitary is applied is stored in the measurement device. Consider the LOCC operation that extracts this information from $M_n$ and then communicates the result to $S$, allowing the local unitary on $S$ to be undone~\cite{quantumlostfound}. Thus, for every outcome this will restore the state on $SM_1\ldots M_{n-1}$ to the state at time $t_{n-1}$~\cite{PianiAdessoPRA.85.040301}. Since $E$ is non-increasing under LOCC~\cite{HHHH09}, the desired result follows.

The following bound generalizes \eqref{eqn2} to the case $n\geq 2$:
\begin{equation}
\label{eqn9}
E(X^1,\dots,X^n) \geq \max_{m< n} \log \frac{1}{c_{m,m+1}},
\end{equation}
where $c_{m,m+1} := \max_{j,k} |\ip{X^m_j}{X^{m+1}_k}|^2 $. The proof of \eqref{eqn9} is essentially the same as that of \eqref{eqn2} and is provided in \footnotemark[\value{footnote}]. Eq.~\eqref{eqn9} implies that if two MUBs are measured one after the other at any point in the sequence of measurements, then the system will become maximally entangled with the measurement devices, and  any further measurements will not generate any more entanglement.  

By the same argument in Corollary~\ref{cor2}, the analogous decoupling result follows:
\begin{equation}
\label{eqn10}
D(\rho^{(n)}_{SS'} || \id /\ddd  \ot \rho^{(n)}_{S'}) \leq \min_{m< n} \log (\ddd \cdot c_{m,m+1}),
\end{equation}
where $\rho^{(n)}_{SS'}$ is the state at time $t_n$. Likewise by the same argument in Corollary~\ref{cor3}, the analogous coherent teleportation result follows:
\begin{equation}
\label{eqn11}
Q(\EC^c) \geq \max_{m< n} \log \frac{1}{c_{m,m+1}},
\end{equation}
where $\EC^c$ is the channel from $S$ at $t_0$ to $M_1\ldots M_n$ at $t_n$, and the analogous generalization for $F_e$ also holds.

\bigskip

\emph{Conclusions.---}We proposed that a signature and a quantification of complementarity of two observables is  given by the entanglement generated when the two observables are sequentially measured on the same system by means of a coherent interaction with corresponding measurement devices. We also noted how this approach to complementarity is intimately related to the {information}-processing primitives of decoupling and coherent teleportation. 

The importance of complementarity in quantum information processing has been explored previously, e.g., by Renes and collaborators (see \cite{RenesHabThesis} and references therein). Such works typically focus on the transmission of information in complementary bases, which turns out to be sufficient to ensure transmission of quantum information. However, the physical scenario of sequential coherent complementary measurements is not obviously connected to mathematical theorems \cite{ChristWinterIEEE2005, RenesBoileau, Renes2010arXiv1003.1150R, ColesEtAlPRA2011} regarding the knowledge or transmission of complementary information, particularly in the case of partial complementarity.

The fact that, in our scheme, the complementarity of two observables measures their power to process quantum information suggests to search for further ``uncertainty'' (or ``certainty'') relations for other information-processing tasks or quantum computing algorithms. Ref.~\cite{BranHoro2010arXiv1010.3654B} already made some progress along these lines, and we expect that our work will stimulate further results in the same perspective.

\bigskip

\emph{Note added.---}One of us (PJC) coauthored also \cite{BertaColesWehner2013}. There, pre-existing entanglement is connected to the uncertainty of measurements on distinct but identically-prepared systems. Such work is not closely related to the present one, since we consider dynamically-created entanglement during sequential measurements.

\bigskip

\emph{Acknowledgments.---}We thank Stephanie Wehner and Takafumi Nakano for helpful discussions. PJC is funded by the Ministry of Education (MOE) and National Research Foundation Singapore, as well as MOE Tier 3 Grant ``Random numbers from quantum processes" (MOE2012-T3-1-009). MP acknowledges support from NSERC, CIFAR, and Ontario Centres of Excellence.

\bibliography{ECR}

\begin{thebibliography}{44}
\expandafter\ifx\csname natexlab\endcsname\relax\def\natexlab#1{#1}\fi
\expandafter\ifx\csname bibnamefont\endcsname\relax
  \def\bibnamefont#1{#1}\fi
\expandafter\ifx\csname bibfnamefont\endcsname\relax
  \def\bibfnamefont#1{#1}\fi
\expandafter\ifx\csname citenamefont\endcsname\relax
  \def\citenamefont#1{#1}\fi
\expandafter\ifx\csname url\endcsname\relax
  \def\url#1{\texttt{#1}}\fi
\expandafter\ifx\csname urlprefix\endcsname\relax\def\urlprefix{URL }\fi
\providecommand{\bibinfo}[2]{#2}
\providecommand{\eprint}[2][]{\url{#2}}

\bibitem[{\citenamefont{Price and Chissick}(1977)}]{price1977uncertainty}
\bibinfo{author}{\bibfnamefont{W.~C.} \bibnamefont{Price}} \bibnamefont{and}
  \bibinfo{author}{\bibfnamefont{S.~S.} \bibnamefont{Chissick}},
  \emph{\bibinfo{title}{The uncertainty principle and foundations of quantum
  mechanics}} (\bibinfo{publisher}{John Wiley \& Sons}, \bibinfo{year}{1977}).

\bibitem[{\citenamefont{Ozawa}(2003)}]{ozawa_pra_2003}
\bibinfo{author}{\bibfnamefont{M.}~\bibnamefont{Ozawa}},
  \bibinfo{journal}{Phys. Rev. A} \textbf{\bibinfo{volume}{67}},
  \bibinfo{pages}{042105} (\bibinfo{year}{2003}).

\bibitem[{\citenamefont{Heisenberg}(1927)}]{Heisenberg}
\bibinfo{author}{\bibfnamefont{W.}~\bibnamefont{Heisenberg}},
  \bibinfo{journal}{Zeitschrift f\"ur Physik} \textbf{\bibinfo{volume}{43}},
  \bibinfo{pages}{172} (\bibinfo{year}{1927}).

\bibitem[{\citenamefont{Kennard}(1927)}]{kennard1927quantum}
\bibinfo{author}{\bibfnamefont{E.}~\bibnamefont{Kennard}}, \bibinfo{journal}{Z.
  Phys} \textbf{\bibinfo{volume}{44}}, \bibinfo{pages}{326}
  (\bibinfo{year}{1927}).

\bibitem[{\citenamefont{Robertson}(1929)}]{Robertson}
\bibinfo{author}{\bibfnamefont{H.~P.} \bibnamefont{Robertson}},
  \bibinfo{journal}{Phys. Rev.} \textbf{\bibinfo{volume}{34}},
  \bibinfo{pages}{163} (\bibinfo{year}{1929}).

\bibitem[{\citenamefont{{Wehner} and {Winter}}(2010)}]{EURreview1}
\bibinfo{author}{\bibfnamefont{S.}~\bibnamefont{{Wehner}}} \bibnamefont{and}
  \bibinfo{author}{\bibfnamefont{A.}~\bibnamefont{{Winter}}},
  \bibinfo{journal}{New J. Phys.} \textbf{\bibinfo{volume}{12}},
  \bibinfo{pages}{025009} (\bibinfo{year}{2010}).

\bibitem[{\citenamefont{Maassen and Uffink}(1988)}]{MaassenUffink}
\bibinfo{author}{\bibfnamefont{H.}~\bibnamefont{Maassen}} \bibnamefont{and}
  \bibinfo{author}{\bibfnamefont{J.~B.~M.} \bibnamefont{Uffink}},
  \bibinfo{journal}{Phys. Rev. Lett.} \textbf{\bibinfo{volume}{60}},
  \bibinfo{pages}{1103} (\bibinfo{year}{1988}).

\bibitem[{\citenamefont{Horodecki et~al.}(2009)\citenamefont{Horodecki,
  Horodecki, Horodecki, and Horodecki}}]{HHHH09}
\bibinfo{author}{\bibfnamefont{R.}~\bibnamefont{Horodecki}},
  \bibinfo{author}{\bibfnamefont{P.}~\bibnamefont{Horodecki}},
  \bibinfo{author}{\bibfnamefont{M.}~\bibnamefont{Horodecki}},
  \bibnamefont{and}
  \bibinfo{author}{\bibfnamefont{K.}~\bibnamefont{Horodecki}},
  \bibinfo{journal}{Rev. Mod. Phys.} \textbf{\bibinfo{volume}{81}},
  \bibinfo{pages}{865} (\bibinfo{year}{2009}).

\bibitem[{\citenamefont{Coles}(2012{\natexlab{a}})}]{ColesCollapsePRA2012}
\bibinfo{author}{\bibfnamefont{P.~J.} \bibnamefont{Coles}},
  \bibinfo{journal}{Phys. Rev. A} \textbf{\bibinfo{volume}{86}},
  \bibinfo{pages}{062334} (\bibinfo{year}{2012}{\natexlab{a}}).

\bibitem[{\citenamefont{Streltsov et~al.}(2011)\citenamefont{Streltsov,
  Kampermann, and Bru\ss{}}}]{StrKamBru11}
\bibinfo{author}{\bibfnamefont{A.}~\bibnamefont{Streltsov}},
  \bibinfo{author}{\bibfnamefont{H.}~\bibnamefont{Kampermann}},
  \bibnamefont{and} \bibinfo{author}{\bibfnamefont{D.}~\bibnamefont{Bru\ss{}}},
  \bibinfo{journal}{Phys. Rev. Lett.} \textbf{\bibinfo{volume}{106}},
  \bibinfo{pages}{160401} (\bibinfo{year}{2011}).

\bibitem[{\citenamefont{Piani et~al.}(2011)\citenamefont{Piani, Gharibian,
  Adesso, Calsamiglia, Horodecki, and Winter}}]{PianiEtAl11}
\bibinfo{author}{\bibfnamefont{M.}~\bibnamefont{Piani}},
  \bibinfo{author}{\bibfnamefont{S.}~\bibnamefont{Gharibian}},
  \bibinfo{author}{\bibfnamefont{G.}~\bibnamefont{Adesso}},
  \bibinfo{author}{\bibfnamefont{J.}~\bibnamefont{Calsamiglia}},
  \bibinfo{author}{\bibfnamefont{P.}~\bibnamefont{Horodecki}},
  \bibnamefont{and} \bibinfo{author}{\bibfnamefont{A.}~\bibnamefont{Winter}},
  \bibinfo{journal}{Phys. Rev. Lett.} \textbf{\bibinfo{volume}{106}},
  \bibinfo{pages}{220403} (\bibinfo{year}{2011}).

\bibitem[{\citenamefont{{Gharibian} et~al.}(2011)\citenamefont{{Gharibian},
  {Piani}, {Adesso}, {Calsamiglia}, and {Horodecki}}}]{GharEtAl2011}
\bibinfo{author}{\bibfnamefont{S.}~\bibnamefont{{Gharibian}}},
  \bibinfo{author}{\bibfnamefont{M.}~\bibnamefont{{Piani}}},
  \bibinfo{author}{\bibfnamefont{G.}~\bibnamefont{{Adesso}}},
  \bibinfo{author}{\bibfnamefont{J.}~\bibnamefont{{Calsamiglia}}},
  \bibnamefont{and}
  \bibinfo{author}{\bibfnamefont{P.}~\bibnamefont{{Horodecki}}},
  \bibinfo{journal}{International Journal of Quantum Information}
  \textbf{\bibinfo{volume}{9}}, \bibinfo{pages}{1701} (\bibinfo{year}{2011}),
  \bibinfo{note}{eprint arXiv:1105.3419 [quant-ph]}.

\bibitem[{\citenamefont{Piani and Adesso}(2012)}]{PianiAdessoPRA.85.040301}
\bibinfo{author}{\bibfnamefont{M.}~\bibnamefont{Piani}} \bibnamefont{and}
  \bibinfo{author}{\bibfnamefont{G.}~\bibnamefont{Adesso}},
  \bibinfo{journal}{Phys. Rev. A} \textbf{\bibinfo{volume}{85}},
  \bibinfo{pages}{040301} (\bibinfo{year}{2012}).

\bibitem[{\citenamefont{Schumacher and Westmoreland}(2002)}]{SchuWestErrCorr}
\bibinfo{author}{\bibfnamefont{B.}~\bibnamefont{Schumacher}} \bibnamefont{and}
  \bibinfo{author}{\bibfnamefont{M.~D.} \bibnamefont{Westmoreland}},
  \bibinfo{journal}{Quantum Information Processing}
  \textbf{\bibinfo{volume}{1}}, \bibinfo{pages}{5} (\bibinfo{year}{2002}), ISSN
  \bibinfo{issn}{1570-0755}.

\bibitem[{\citenamefont{{Dupuis} et~al.}(2010)\citenamefont{{Dupuis}, {Berta},
  {Wullschleger}, and {Renner}}}]{DupEtAl2010}
\bibinfo{author}{\bibfnamefont{F.}~\bibnamefont{{Dupuis}}},
  \bibinfo{author}{\bibfnamefont{M.}~\bibnamefont{{Berta}}},
  \bibinfo{author}{\bibfnamefont{J.}~\bibnamefont{{Wullschleger}}},
  \bibnamefont{and} \bibinfo{author}{\bibfnamefont{R.}~\bibnamefont{{Renner}}},
  \bibinfo{journal}{ArXiv e-prints}  (\bibinfo{year}{2010}),
  \bibinfo{note}{1012.6044v2}.

\bibitem[{\citenamefont{{Dupuis}}(2009)}]{DupuisThesis}
\bibinfo{author}{\bibfnamefont{F.}~\bibnamefont{{Dupuis}}}, Ph.D. thesis,
  \bibinfo{school}{Universit\'e de Montr\'eal} (\bibinfo{year}{2009}),
  \urlprefix\url{http://arxiv.org/abs/1004.1641}.

\bibitem[{\citenamefont{Groisman et~al.}(2005)\citenamefont{Groisman, Popescu,
  and Winter}}]{GroPopWinPRA72.032317}
\bibinfo{author}{\bibfnamefont{B.}~\bibnamefont{Groisman}},
  \bibinfo{author}{\bibfnamefont{S.}~\bibnamefont{Popescu}}, \bibnamefont{and}
  \bibinfo{author}{\bibfnamefont{A.}~\bibnamefont{Winter}},
  \bibinfo{journal}{Phys. Rev. A} \textbf{\bibinfo{volume}{72}},
  \bibinfo{pages}{032317} (\bibinfo{year}{2005}).

\bibitem[{\citenamefont{Buscemi}(2009)}]{BuscemiNJP2009}
\bibinfo{author}{\bibfnamefont{F.}~\bibnamefont{Buscemi}},
  \bibinfo{journal}{New Journal of Physics} \textbf{\bibinfo{volume}{11}},
  \bibinfo{pages}{123002} (\bibinfo{year}{2009}).

\bibitem[{\citenamefont{Brassard et~al.}(1998)\citenamefont{Brassard,
  Braunstein, and Cleve}}]{Brassard1998}
\bibinfo{author}{\bibfnamefont{G.}~\bibnamefont{Brassard}},
  \bibinfo{author}{\bibfnamefont{S.~L.} \bibnamefont{Braunstein}},
  \bibnamefont{and} \bibinfo{author}{\bibfnamefont{R.}~\bibnamefont{Cleve}},
  \bibinfo{journal}{Physica D: Nonlinear Phenomena}
  \textbf{\bibinfo{volume}{120}}, \bibinfo{pages}{43 } (\bibinfo{year}{1998}),
  ISSN \bibinfo{issn}{0167-2789}, \bibinfo{note}{proceedings of the Fourth
  Workshop on Physics and Consumption}.

\bibitem[{\citenamefont{Harrow}(2004)}]{HarrowPRL2004}
\bibinfo{author}{\bibfnamefont{A.}~\bibnamefont{Harrow}},
  \bibinfo{journal}{Phys. Rev. Lett.} \textbf{\bibinfo{volume}{92}},
  \bibinfo{pages}{097902} (\bibinfo{year}{2004}).

\bibitem[{\citenamefont{Zurek}(2003)}]{ZurekReview}
\bibinfo{author}{\bibfnamefont{W.~H.} \bibnamefont{Zurek}},
  \bibinfo{journal}{Rev. Mod. Phys.} \textbf{\bibinfo{volume}{75}},
  \bibinfo{pages}{715} (\bibinfo{year}{2003}).

\bibitem[{\citenamefont{Seevinck}(2010)}]{seevinck2010}
\bibinfo{author}{\bibfnamefont{M.}~\bibnamefont{Seevinck}},
  \bibinfo{journal}{Quantum Information Processing}
  \textbf{\bibinfo{volume}{9}}, \bibinfo{pages}{273} (\bibinfo{year}{2010}),
  ISSN \bibinfo{issn}{1570-0755}.

\bibitem[{\citenamefont{{Berta} et~al.}(2010)\citenamefont{{Berta},
  {Christandl}, {Colbeck}, {Renes}, and {Renner}}}]{BertaEtAl}
\bibinfo{author}{\bibfnamefont{M.}~\bibnamefont{{Berta}}},
  \bibinfo{author}{\bibfnamefont{M.}~\bibnamefont{{Christandl}}},
  \bibinfo{author}{\bibfnamefont{R.}~\bibnamefont{{Colbeck}}},
  \bibinfo{author}{\bibfnamefont{J.~M.} \bibnamefont{{Renes}}},
  \bibnamefont{and} \bibinfo{author}{\bibfnamefont{R.}~\bibnamefont{{Renner}}},
  \bibinfo{journal}{Nature Physics} \textbf{\bibinfo{volume}{6}},
  \bibinfo{pages}{659} (\bibinfo{year}{2010}).

\bibitem[{\citenamefont{Coles}(2012{\natexlab{b}})}]{ColesDecDisc2012}
\bibinfo{author}{\bibfnamefont{P.~J.} \bibnamefont{Coles}},
  \bibinfo{journal}{Phys. Rev. A} \textbf{\bibinfo{volume}{85}},
  \bibinfo{pages}{042103} (\bibinfo{year}{2012}{\natexlab{b}}).

\bibitem[{\citenamefont{Horodecki et~al.}(2005)\citenamefont{Horodecki, Winter
  et~al.}}]{horodecki2005partial}
\bibinfo{author}{\bibfnamefont{J.~O.} \bibnamefont{Horodecki}},
  \bibinfo{author}{\bibfnamefont{A.}~\bibnamefont{Winter}},
  \bibnamefont{et~al.}, \bibinfo{journal}{Nature}
  \textbf{\bibinfo{volume}{436}}, \bibinfo{pages}{673} (\bibinfo{year}{2005}).

\bibitem[{\citenamefont{Gisin et~al.}(2002)\citenamefont{Gisin, Ribordy,
  Tittel, and Zbinden}}]{RevModPhysCrypto}
\bibinfo{author}{\bibfnamefont{N.}~\bibnamefont{Gisin}},
  \bibinfo{author}{\bibfnamefont{G.}~\bibnamefont{Ribordy}},
  \bibinfo{author}{\bibfnamefont{W.}~\bibnamefont{Tittel}}, \bibnamefont{and}
  \bibinfo{author}{\bibfnamefont{H.}~\bibnamefont{Zbinden}},
  \bibinfo{journal}{Rev. Mod. Phys.} \textbf{\bibinfo{volume}{74}},
  \bibinfo{pages}{145} (\bibinfo{year}{2002}).

\bibitem[{\citenamefont{Cornelio et~al.}(2011)\citenamefont{Cornelio,
  de~Oliveira, and Fanchini}}]{CorDeOFan11}
\bibinfo{author}{\bibfnamefont{M.~F.} \bibnamefont{Cornelio}},
  \bibinfo{author}{\bibfnamefont{M.~C.} \bibnamefont{de~Oliveira}},
  \bibnamefont{and} \bibinfo{author}{\bibfnamefont{F.~F.}
  \bibnamefont{Fanchini}}, \bibinfo{journal}{Phys. Rev. Lett.}
  \textbf{\bibinfo{volume}{107}}, \bibinfo{pages}{020502}
  (\bibinfo{year}{2011}).

\bibitem[{\citenamefont{Nielsen and Chuang}(2000)}]{NieChu00}
\bibinfo{author}{\bibfnamefont{M.~A.} \bibnamefont{Nielsen}} \bibnamefont{and}
  \bibinfo{author}{\bibfnamefont{I.~L.} \bibnamefont{Chuang}},
  \emph{\bibinfo{title}{Quantum Computation and Quantum Information}}
  (\bibinfo{publisher}{Cambridge University Press},
  \bibinfo{address}{Cambridge}, \bibinfo{year}{2000}), \bibinfo{edition}{5th}
  ed.

\bibitem[{\citenamefont{Wilde}(2011)}]{wilde2011classical}
\bibinfo{author}{\bibfnamefont{M.~M.} \bibnamefont{Wilde}},
  \bibinfo{journal}{arXiv preprint arXiv:1106.1445}  (\bibinfo{year}{2011}).

\bibitem[{\citenamefont{Konig et~al.}(2009)\citenamefont{Konig, Renner, and
  Schaffner}}]{KonRenSch09}
\bibinfo{author}{\bibfnamefont{R.}~\bibnamefont{Konig}},
  \bibinfo{author}{\bibfnamefont{R.}~\bibnamefont{Renner}}, \bibnamefont{and}
  \bibinfo{author}{\bibfnamefont{C.}~\bibnamefont{Schaffner}},
  \bibinfo{journal}{IEEE Trans. Inf. Theory} \textbf{\bibinfo{volume}{55}},
  \bibinfo{pages}{4337 } (\bibinfo{year}{2009}).

\bibitem[{\citenamefont{Kretschmann et~al.}(2008)\citenamefont{Kretschmann,
  Schlingemann, and Werner}}]{continuitystinespring}
\bibinfo{author}{\bibfnamefont{D.}~\bibnamefont{Kretschmann}},
  \bibinfo{author}{\bibfnamefont{D.}~\bibnamefont{Schlingemann}},
  \bibnamefont{and} \bibinfo{author}{\bibfnamefont{R.}~\bibnamefont{Werner}},
  \bibinfo{journal}{Information Theory, IEEE Transactions on}
  \textbf{\bibinfo{volume}{54}}, \bibinfo{pages}{1708} (\bibinfo{year}{2008}),
  ISSN \bibinfo{issn}{0018-9448}.

\bibitem[{\citenamefont{Vedral}(2003)}]{VedralPRL2003}
\bibinfo{author}{\bibfnamefont{V.}~\bibnamefont{Vedral}},
  \bibinfo{journal}{Phys. Rev. Lett.} \textbf{\bibinfo{volume}{90}},
  \bibinfo{pages}{050401} (\bibinfo{year}{2003}).

\bibitem[{\citenamefont{Gregoratti and Werner}(2003)}]{quantumlostfound}
\bibinfo{author}{\bibfnamefont{M.}~\bibnamefont{Gregoratti}} \bibnamefont{and}
  \bibinfo{author}{\bibfnamefont{R.~F.} \bibnamefont{Werner}},
  \bibinfo{journal}{Journal of Modern Optics} \textbf{\bibinfo{volume}{50}},
  \bibinfo{pages}{915} (\bibinfo{year}{2003}).

\bibitem[{\citenamefont{{Renes}}(2012)}]{RenesHabThesis}
\bibinfo{author}{\bibfnamefont{J.~M.} \bibnamefont{{Renes}}},
  \bibinfo{journal}{ArXiv e-prints}  (\bibinfo{year}{2012}),
  \bibinfo{note}{1212.2379}.

\bibitem[{\citenamefont{Christandl and Winter}(2005)}]{ChristWinterIEEE2005}
\bibinfo{author}{\bibfnamefont{M.}~\bibnamefont{Christandl}} \bibnamefont{and}
  \bibinfo{author}{\bibfnamefont{A.}~\bibnamefont{Winter}},
  \bibinfo{journal}{IEEE Trans. Inf. Theory} \textbf{\bibinfo{volume}{51}},
  \bibinfo{pages}{3159} (\bibinfo{year}{2005}).

\bibitem[{\citenamefont{Renes and Boileau}(2009)}]{RenesBoileau}
\bibinfo{author}{\bibfnamefont{J.~M.} \bibnamefont{Renes}} \bibnamefont{and}
  \bibinfo{author}{\bibfnamefont{J.-C.} \bibnamefont{Boileau}},
  \bibinfo{journal}{Phys. Rev. Lett.} \textbf{\bibinfo{volume}{103}},
  \bibinfo{pages}{020402} (\bibinfo{year}{2009}).

\bibitem[{\citenamefont{{Renes}}(2010)}]{Renes2010arXiv1003.1150R}
\bibinfo{author}{\bibfnamefont{J.~M.} \bibnamefont{{Renes}}},
  \bibinfo{journal}{ArXiv e-prints}  (\bibinfo{year}{2010}),
  \eprint{1003.1150}.

\bibitem[{\citenamefont{Coles et~al.}(2011)\citenamefont{Coles, Yu, Gheorghiu,
  and Griffiths}}]{ColesEtAlPRA2011}
\bibinfo{author}{\bibfnamefont{P.~J.} \bibnamefont{Coles}},
  \bibinfo{author}{\bibfnamefont{L.}~\bibnamefont{Yu}},
  \bibinfo{author}{\bibfnamefont{V.}~\bibnamefont{Gheorghiu}},
  \bibnamefont{and} \bibinfo{author}{\bibfnamefont{R.~B.}
  \bibnamefont{Griffiths}}, \bibinfo{journal}{Phys. Rev. A}
  \textbf{\bibinfo{volume}{83}}, \bibinfo{pages}{062338}
  (\bibinfo{year}{2011}).

\bibitem[{\citenamefont{{Brandao} and
  {Horodecki}}(2013)}]{BranHoro2010arXiv1010.3654B}
\bibinfo{author}{\bibfnamefont{F.~G.~S.~L.} \bibnamefont{{Brandao}}}
  \bibnamefont{and}
  \bibinfo{author}{\bibfnamefont{M.}~\bibnamefont{{Horodecki}}},
  \bibinfo{journal}{Q. Inf. Comp.} \textbf{\bibinfo{volume}{13}},
  \bibinfo{pages}{0901} (\bibinfo{year}{2013}).

\bibitem[{\citenamefont{{Berta} et~al.}(2013)\citenamefont{{Berta}, {Coles},
  and {Wehner}}}]{BertaColesWehner2013}
\bibinfo{author}{\bibfnamefont{M.}~\bibnamefont{{Berta}}},
  \bibinfo{author}{\bibfnamefont{P.~J.} \bibnamefont{{Coles}}},
  \bibnamefont{and} \bibinfo{author}{\bibfnamefont{S.}~\bibnamefont{{Wehner}}},
  \bibinfo{journal}{ArXiv e-prints}  (\bibinfo{year}{2013}),
  \eprint{1302.5902}.

\bibitem[{\citenamefont{Ohya and Petz}(2004)}]{ohya2004quantum}
\bibinfo{author}{\bibfnamefont{M.}~\bibnamefont{Ohya}} \bibnamefont{and}
  \bibinfo{author}{\bibfnamefont{D.}~\bibnamefont{Petz}},
  \emph{\bibinfo{title}{Quantum entropy and its use}}
  (\bibinfo{publisher}{Springer Verlag}, \bibinfo{year}{2004}).

\bibitem[{\citenamefont{{Tomamichel} and {Renner}}(2011)}]{TomRen2010}
\bibinfo{author}{\bibfnamefont{M.}~\bibnamefont{{Tomamichel}}}
  \bibnamefont{and} \bibinfo{author}{\bibfnamefont{R.}~\bibnamefont{{Renner}}},
  \bibinfo{journal}{Phys. Rev. Lett.} \textbf{\bibinfo{volume}{106}},
  \bibinfo{pages}{110506} (\bibinfo{year}{2011}).

\bibitem[{\citenamefont{Devetak and Winter}(2005)}]{DevWin05}
\bibinfo{author}{\bibfnamefont{I.}~\bibnamefont{Devetak}} \bibnamefont{and}
  \bibinfo{author}{\bibfnamefont{A.}~\bibnamefont{Winter}},
  \bibinfo{journal}{Proc. R. Soc. A} \textbf{\bibinfo{volume}{461}},
  \bibinfo{pages}{207} (\bibinfo{year}{2005}).

\bibitem[{\citenamefont{Tomamichel et~al.}(2009)\citenamefont{Tomamichel,
  Colbeck, and Renner}}]{TomColRen09}
\bibinfo{author}{\bibfnamefont{M.}~\bibnamefont{Tomamichel}},
  \bibinfo{author}{\bibfnamefont{R.}~\bibnamefont{Colbeck}}, \bibnamefont{and}
  \bibinfo{author}{\bibfnamefont{R.}~\bibnamefont{Renner}},
  \bibinfo{journal}{IEEE Trans. Inf. Theory} \textbf{\bibinfo{volume}{55}},
  \bibinfo{pages}{5840 } (\bibinfo{year}{2009}).

\end{thebibliography}

\pagebreak

\appendix

\section{Various measures of entanglement}\label{app1}

Here we define various measures of entanglement for which our main result holds. That is, the bound:
\begin{equation}
\label{eqn12}
E(X,Z) \geq \log(1/c)
\end{equation}
was stated in the main text where $E$ was assumed to be the distillable entanglement, but we discuss here that several other measures of entanglement also obey this bound.

Consider the following measures of entanglement for some bipartite state $\rho_{AB}$~\cite{HHHH09}:\\

(1) $E_D$, distillable entanglement: the optimal rate to distill EPR pairs using LOCC in the asymptotic limit of infinitely many copies of $\rho_{AB}$.\\

(2) $K$, distillable secret key: the optimal rate to distill bits of secret key using LOCC in the asymptotic limit of infinitely many copies of $\rho_{AB}$.\\

(3) $E_{F}$, Entanglement of formation: $E_F(\rho_{AB}):= \min_{\{\ket{\phi_j}\}} \sum_j p_j H[\Tr_B(\dya{\phi_j})]$, where the minimization is over all convex decompositions of $\rho_{AB} = \sum_j p_j \dya{\phi_j}$.\\

(4) $E_{C}$, Entanglement cost: the regularization of $E_F$, $E_C(\rho_{AB}) = \lim_{N\to \infty} (1/N) E_F(\rho_{AB}^{\ot N})$.\\

(5) $E_{\text{sq}}$, squashed entanglement: $E_{\text{sq}}(\rho_{AB}) = (1/2)\min_C I(A\colo B | C)$, where $I(A\colo B | C)$ is the conditional mutual information, and the minimization is over all extensions $\rho_{ABC}$ of $\rho_{AB}$. \\

(6) $E_{R}$, relative entropy of entanglement: $E_{R}(\rho_{AB}) = \min_{\sg_{AB}\in \text{Sep}}D(\rho_{AB}|| \sg_{AB})$, where the minimization is over all separable states $\sg_{AB}$. \\

(7) $E_{R,\infty}$, regularized relative entropy of entanglement: $E_{R,\infty}(\rho_{AB}) = \lim_{N\to \infty} (1/N) E_R(\rho_{AB}^{\ot N}) $.\\

(8) $E_{\max}$, max relative entropy of entanglement: $E_{\max}(\rho_{AB}) = \min_{\sg_{AB}\in \text{Sep}}D_{\max}(\rho_{AB}|| \sg_{AB})$, where the minimization is over all separable states $\sg_{AB}$, and where $D_{\max}(\rho || \sg):=\log\min\{\lm : \rho\leq \lm \sg \}$. \\

(9) $E_{\fid}$, fidelity relative entropy of entanglement: $E_{\fid}(\rho_{AB}) = \min_{\sg_{AB}\in \text{Sep}}D_{\fid}(\rho_{AB}|| \sg_{AB})$, where the minimization is over all separable states $\sg_{AB}$, and where $D_{\fid}(\rho || \sg):= - 2  \log \Tr [(\sqrt{\rho} \sg \sqrt{\rho})^{1/2}]$. \\

\begin{proposition}
Equation~\eqref{eqn12} holds for all of the entanglement measures in the above list.
\end{proposition}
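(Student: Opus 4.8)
The plan is to establish the inequality \eqref{eqn12} for each of the nine measures by exploiting the partial ordering among entanglement measures together with the fact that the bound has already been proved for the distillable entanglement $E_D$ in Theorem~\ref{thmMain230}. The key observation is that $E_D$ is the \emph{smallest} operationally reasonable entanglement measure in the sense that $E_D(\rho_{AB}) \leq E(\rho_{AB})$ for any $E$ satisfying the standard axioms (monotonicity under LOCC, normalization on maximally entangled states, and so on). So for measures (2)--(9) it suffices to cite the appropriate chain of inequalities $E_D \leq E$ relative to each $E$ in the list, all of which are collected in the review~\cite{HHHH09}.

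Concretely, I would proceed as follows. First, recall that distillable secret key dominates distillable entanglement, $E_D \leq K$, since every distillation protocol producing EPR pairs can be post-processed into secret key at the same rate. Second, for entanglement of formation and entanglement cost, use the hashing-inequality-based fact that $E_D \leq E_C \leq E_F$ (the first from the operational reversibility bound, the second since $E_C$ is the regularization of $E_F$ and $E_F$ is subadditive). Third, squashed entanglement satisfies $E_D \leq E_{\sq}$, a result of Christandl--Winter. Fourth, for the relative-entropy-type measures, recall $E_D \leq E_{R,\infty} \leq E_R$ and also $E_R \leq E_{\max}$ (since $D \leq D_{\max}$ pointwise over separable $\sigma$); for $E_{\fid}$, one uses $D_{\fid} \geq D$ — equivalently, $-2\log\Tr[(\sqrt{\rho}\sigma\sqrt{\rho})^{1/2}] \geq D(\rho\|\sigma)$, which follows from the fact that the fidelity-based relative entropy (essentially a Rényi-$1/2$ relative entropy) lower-bounds the Umegaki relative entropy — so $E_{\fid} \geq E_R \geq E_D$. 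Assembling these, every measure in the list is at least $E_D(\rho^{(2)}_{SM_1M_2})$, which is at least $\log(1/c)$ by Theorem~\ref{thmMain230}.

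The only point requiring a little care — and the one I expect to be the main obstacle — is $E_{\fid}$, because unlike the others it is not one of the ``textbook'' entanglement monotones and its ordering relative to $E_R$ is less standard. One should verify that $D_{\fid}(\rho\|\sigma) \geq D(\rho\|\sigma)$; this is the statement that the sandwiched or Petz Rényi relative entropy of order $1/2$ is a lower bound on the Umegaki relative entropy, which holds by monotonicity of Rényi relative entropies in the order parameter together with the $\alpha \to 1$ limit. Given that inequality, $E_{\fid}(\rho_{AB}) = \min_{\sigma \in \text{Sep}} D_{\fid}(\rho\|\sigma) \geq \min_{\sigma \in \text{Sep}} D(\rho\|\sigma) = E_R(\rho_{AB})$, and the rest follows. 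The remaining eight cases are essentially bookkeeping with well-known inequalities, so the proof reduces to stating the chain $E_D \leq (\text{each measure})$ and invoking Theorem~\ref{thmMain230}.
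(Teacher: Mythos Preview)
Your argument for measures (2)--(8) is correct and matches the paper exactly: the paper simply notes that $E_D$ is a lower bound for each of $K$, $E_F$, $E_C$, $E_{\sq}$, $E_R$, $E_{R,\infty}$, $E_{\max}$, and invokes Theorem~\ref{thmMain230}.

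The gap is in your treatment of $E_{\fid}$. The inequality you want, $D_{\fid}(\rho\|\sg) \geq D(\rho\|\sg)$, is false; indeed, your own justification points the wrong way. The quantity $D_{\fid}(\rho\|\sg) = -2\log\Tr[(\sqrt{\rho}\,\sg\,\sqrt{\rho})^{1/2}]$ is the (sandwiched) R\'enyi relative entropy of order $\alpha=1/2$, and by monotonicity in $\alpha$ one has $D_{1/2}\leq D_1 = D$, i.e.\ $D_{\fid}\leq D$. Consequently $E_{\fid}\leq E_R$, not $E_{\fid}\geq E_R$, and the chain $E_{\fid}\geq E_R\geq E_D$ collapses. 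A quick sanity check: for $\rho=\id/2$ and $\sg=\mathrm{diag}(1/4,3/4)$ one computes $D_{\fid}\approx 0.10 < 0.21 \approx D$.

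This is precisely why the paper does \emph{not} reduce the $E_{\fid}$ case to $E_D$. Instead it reruns the uncertainty-relation proof of Theorem~\ref{thmMain230} with the min/max-entropy uncertainty relation of Tomamichel--Renner~\cite{TomRen2010}: applying it at time $t_1$ gives
\[
H_{\max}(X|M_1M_2)_{\rho^{(1)}} + H_{\min}(Z|S')_{\rho^{(1)}} \geq \log(1/c),
\]
and one then uses that $H_{\max}(X|M_1M_2)_{\rho^{(1)}}=0$ (since $M_1$ has already recorded $X$) together with the identification $H_{\min}(Z|S')_{\rho^{(1)}}=E_{\fid}(X,Z)$ from \cite{ColesDecDisc2012,ColesCollapsePRA2012}. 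So for $E_{\fid}$ a direct argument is needed, not a comparison with $E_D$.
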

\begin{proof}
In the main text, we proved this bound for $E_D$. Now note that $E_D$ is a lower bound on each of the measures $K$, $E_F$, $E_C$, $E_{\text{sq}}$, $E_R$, $E_{R,\infty}$, and $E_{\max}$, hence \eqref{eqn12} must also hold for each of these measures. For $E_{\fid}$ we replicate our proof in the main text based on the uncertainty principle with quantum memory, except this time we use the uncertainty relation for the min and max entropies from Ref.~\cite{TomRen2010}. Applying this uncertainty relation at time $t_1$ in Fig.~1 (from the main text) gives
$$H_{\max}(X|M_1M_2)_{\rho^{(1)}}+ H_{\min} (Z|S')_{\rho^{(1)}} \geq \log(1/c)$$
where $S'$ purifies $\rho^{(0)}_S$. The proof follows by noting that $H_{\max}(X|M_1M_2)_{\rho^{(1)}} =0$ since $M_1$ already measured $X$, and $H_{\min} (Z|S')_{\rho^{(1)}}$ is equal to the entanglement at time $t_2$ between $S$ and $M_1M_2$ as quantified by $E_{\fid}$ \cite{ColesDecDisc2012, ColesCollapsePRA2012}.
\end{proof}

\section{Initially mixed devices}

Here we generalize our results to the case where the measurement devices are initially in mixed states. As noted in the main text, we assume the devices' initial states are diagonal in the standard basis, i.e., the devices have been decohered in their pointer bases. Our extension to mixed devices is aided by the following lemma.

\begin{lemma}
\label{lem23496}
Let $\rho_{AB}= \sum_j p_j \rho_{AB,j}$ be a mixture of bipartite states $\{\rho_{AB,j}\}$ according to probability distribution $\{p_j\}$. Then
\begin{equation}
\label{eqn13}
-H(A|B)_{\rho} \geq \sum_j p_j [-H(A|B)_{\rho_j}] - H(\{p_j\})
\end{equation}
where $H(A|B)_{\rho_j}$ denotes the conditional entropy of $\rho_{AB,j}$.
\end{lemma}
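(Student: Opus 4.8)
The plan is to reduce the claim to a standard property of the conditional von Neumann entropy of a classical register, purifying the \emph{classical} mixture rather than the quantum branches. First I would introduce an auxiliary system $J$ with orthonormal basis $\{\ket{j}\}$ and form the classical-quantum state $\rho_{ABJ} := \sum_j p_j\,\rho_{AB,j}\ot\dya{j}_J$. By construction $\Tr_J\rho_{ABJ}=\rho_{AB}$, so the quantity $-H(A|B)_\rho$ on the left of \eqref{eqn13} equals $-H(A|B)$ evaluated on $\rho_{ABJ}$. On the other side, since $J$ is classical in $\rho_{ABJ}$, conditioning on $J$ merely averages the branch conditional entropies: $H(A|BJ)_{\rho_{ABJ}}=\sum_j p_j\,H(A|B)_{\rho_j}$. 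Hence \eqref{eqn13} is equivalent to the single inequality $H(A|B)_{\rho_{ABJ}} \leq H(A|BJ)_{\rho_{ABJ}} + H(\{p_j\})$.

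To prove that inequality I would use the chain rule for the conditional mutual information. With all entropies below evaluated on $\rho_{ABJ}$, one has $H(A|B)-H(A|BJ)=I(A\colo J|B)=H(J|B)-H(J|AB)$. The crucial input is that the conditional von Neumann entropy of a classical system, conditioned on anything, is nonnegative, so $H(J|AB)\geq 0$; concretely this follows from the block-diagonal form of $\rho_{ABJ}$ together with the elementary mixing bound $H(\sum_j p_j\sg_j)\leq \sum_j p_j H(\sg_j)+H(\{p_j\})$ (itself the $B$-trivial special case of the lemma). Combined with subadditivity, $H(J|B)\leq H(J)=H(\{p_j\})$, this gives $I(A\colo J|B)\leq H(\{p_j\})$, hence $H(A|B)_{\rho_{ABJ}}\leq H(A|BJ)_{\rho_{ABJ}}+H(\{p_j\})$. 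Negating and inserting the two identifications from the previous paragraph yields \eqref{eqn13}.

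I do not expect a genuine obstacle: the statement is just the ``conditional'' counterpart of the familiar entropy mixing inequality, and every ingredient (manipulation of cq states, nonnegativity of the conditional entropy of a classical register, subadditivity, the conditional-mutual-information chain rule) is textbook. The only points demanding care are the sign conventions, so that the classical register is appended on the correct side and the conditioning runs the right way, and the explicit use of classicality when asserting $H(J|AB)\geq 0$ (which would fail for a generic quantum $J$). A slightly shorter variant bypasses the mutual-information language entirely: expand the difference $H(A|B)_\rho-\sum_j p_j H(A|B)_{\rho_j}$ as $[H(AB)-H(ABJ)]+[H(BJ)-H(B)]$ on $\rho_{ABJ}$, bound the first bracket by $0$ (it equals $-H(J|AB)$) and the second by $H(\{p_j\})$ (it equals $H(J|B)$), and conclude.
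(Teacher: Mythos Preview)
Your proof is correct, and its underlying ingredients are exactly those of the paper: the paper combines concavity $H(\rho_B)\geq\sum_j p_j H(\rho_{B,j})$ with the mixing bound $H(\rho_{AB})\leq H(\{p_j\})+\sum_j p_j H(\rho_{AB,j})$ directly, which in your cq-state language are precisely the statements $H(J|B)\leq H(J)$ and $H(J|AB)\geq 0$. Your ``slightly shorter variant'' is essentially the paper's proof verbatim; the main argument just wraps the same two inequalities in auxiliary-register and conditional-mutual-information notation.
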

\begin{proof}
This is a straightforward entropic inequality, resulting from combining concavity of the entropy $H(\rho_B) \geq \sum_j p_j H(\rho_{B,j})$ with the inequality $ H(\{p_j\}) + \sum_j p_j H(\rho_{AB,j}) \geq H(\rho_{AB})$ \cite{NieChu00}.
\end{proof}

With this lemma, we obtain the following corollary of our main result, which extends this result to initially mixed devices.

\begin{corollary}
\label{thmMixed2505}
Consider the paradigm discussed in the main text, where the observables $X$ and $Z$ are sequentially measured, as shown in Fig.~1.  Let $E(X,Z)$ denote the distillable entanglement at time $t_2$ between $S$ and $M_1M_2$. Let $\rho^{(0)}_{M_1} = \sum_j \al_j \dya{j}$ and $\rho^{(0)}_{M_2} = \sum_j \bt_j \dya{j}$ be possibly mixed states. Then,
\begin{equation}
\label{eqn14}
E(X,Z) \geq \log (1/c) - [H(\rho^{(0)}_{M_1}) + H(\rho^{(0)}_{M_2})].
\end{equation}
\end{corollary}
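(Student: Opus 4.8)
The plan is to reduce the claim to the pure-device case, Theorem~\ref{thmMain230}, via a convexity argument built on Lemma~\ref{lem23496}. First I would use the linearity of the circuit of Fig.~\ref{fgr1} in the devices' initial states: with $\rho^{(0)}_{M_1}=\sum_a\al_a[a]$ and $\rho^{(0)}_{M_2}=\sum_b\bt_b[b]$, the final state splits as
\begin{equation}
\label{eqn:mixdec}
\rho^{(2)}_{SM_1M_2}=\sum_{a,b}\al_a\bt_b\,\rho^{(2)}_{(a,b)},
\end{equation}
where $\rho^{(2)}_{(a,b)}$ denotes the output of the \emph{same} circuit run with $M_1,M_2$ initialized in $\ket a,\ket b$ instead of $\ket 0,\ket 0$. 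Since the controlled-shift unitaries $U_X=\sum_j[X_j]\ot\Sbb^j\ot\id$ and $U_Z=\sum_j[Z_j]\ot\id\ot\Sbb^j$ commute with $\id_S\ot\Sbb^a\ot\Sbb^b$, and $\Sbb^a\ket0=\ket a$, each branch is a local-unitary image of the pure-device state: $\rho^{(2)}_{(a,b)}=(\id_S\ot\Sbb^a\ot\Sbb^b)\,\rho^{(2)}_{(0,0)}\,(\id_S\ot\Sbb^a\ot\Sbb^b)\ad$, with $\rho^{(2)}_{(0,0)}$ the state analysed in the main text.

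Next I would transfer the entropy bound to each branch. Conditional entropy is invariant under local unitaries, so $-H(S|M_1M_2)_{\rho^{(2)}_{(a,b)}}=-H(S|M_1M_2)_{\rho^{(2)}_{(0,0)}}$ for all $(a,b)$; and for the pure-device state the identity $E_D(\rho^{(2)}_{(0,0)})=-H(S|M_1M_2)_{\rho^{(2)}_{(0,0)}}$ used in the proof of Corollary~\ref{cor2}, together with Theorem~\ref{thmMain230}, gives $-H(S|M_1M_2)_{\rho^{(2)}_{(a,b)}}\geq\log(1/c)$ for every $(a,b)$.

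Then I would insert \eqref{eqn:mixdec} into Lemma~\ref{lem23496} with mixing weights $p_{(a,b)}=\al_a\bt_b$:
\begin{equation}
\begin{split}
-H(S|M_1M_2)_{\rho^{(2)}}
&\geq \sum_{a,b}\al_a\bt_b\,[-H(S|M_1M_2)_{\rho^{(2)}_{(a,b)}}]-H(\{\al_a\bt_b\})\\
&\geq \log(1/c)-H(\{\al_a\bt_b\}).
\end{split}
\end{equation}
Since $\{\al_a\bt_b\}$ is a product distribution, $H(\{\al_a\bt_b\})=H(\{\al_a\})+H(\{\bt_b\})=H(\rho^{(0)}_{M_1})+H(\rho^{(0)}_{M_2})$, and the hashing lower bound on distillable entanglement~\cite{HHHH09}, $E(X,Z)=E_D(\rho^{(2)}_{SM_1M_2})\geq-H(S|M_1M_2)_{\rho^{(2)}}$, then yields \eqref{eqn14}.

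Most of this is bookkeeping once Lemma~\ref{lem23496} is granted; the step that needs genuine care is the structural observation in the first paragraph---that re-initializing the devices to $\ket a,\ket b$ produces, at the output, precisely the local shift unitary $\Sbb^a\ot\Sbb^b$ on $M_1M_2$ applied to the pure-device state $\rho^{(2)}_{(0,0)}$, which is what lets Theorem~\ref{thmMain230} be applied branch by branch. The remaining ingredients (additivity of entropy on a product distribution, and the hashing inequality) are standard.
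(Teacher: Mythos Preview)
Your proof is correct and follows essentially the same approach as the paper's: decompose the mixed-device output as a convex combination indexed by the initial computational-basis states of $M_1$ and $M_2$, apply Lemma~\ref{lem23496}, use that each branch has the same $-H(S|M_1M_2)$ as the pure-device case (the paper phrases this as ``the correlations across the $S{:}M_1M_2$ cut are independent of $q$ and $r$'', whereas you make the local-unitary equivalence via $\Sbb^a\ot\Sbb^b$ explicit), invoke Theorem~\ref{thmMain230} for the $\log(1/c)$ bound, split the product entropy, and finish with the hashing inequality. The only cosmetic difference is that you spell out the commutation of $U_X,U_Z$ with the shifts to justify the branch equivalence, which the paper leaves implicit.
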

\begin{proof}
Expanding $\rho^{(0)}_{M_1}$ and $\rho^{(0)}_{M_2}$ allows us to write the state at time $t_2$ as:
\begin{equation}
\rho^{(2)}_{SM_1M_2} = \sum_{q,r} \al_q \bt_r \rho^{(2)}_{SM_1M_2,q,r}
\end{equation}
where 
\begin{align}
\rho^{(2)}_{SM_1M_2,q,r}=\sum_{j,k,l,m} &[Z_l][X_j]\rho^{(0)}_S[X_k][Z_m] \notag \\
&\ot \dyad{q+j}{q+k}\ot \dyad{r+l}{r+m}.\notag
\end{align}
Applying Lemma~\ref{lem23496} gives
\begin{align}
\label{eqn15}
-&H(S|M_1M_2)_{\rho^{(2)}} \notag \\
&\geq \sum_{q,r} \al_q \bt_r [-H(S|M_1M_2)_{\rho^{(2)}_{q,r}}] - H(\{\al_q \bt_r\})\notag \\
&\geq \sum_{q,r} \al_q \bt_r \log (1/c)- H(\{\al_q \bt_r\})\notag \\
&=  \log (1/c)- [H(\rho^{(0)}_{M_1})+H(\rho^{(0)}_{M_2})]
\end{align}
Here, the second inequality notes that the correlations across the $S$:$M_1M_2$ cut are independent of the value of $q$ and $r$, so we can set $q=r=0$ and note that $-H(S|M_1M_2)_{\rho^{(2)}_{0,0}}$ is equal to the entanglement that we lower bounded in our main result by $\log(1/c)$. The last line of \eqref{eqn15} uses the additivity of the entropy to obtain $H(\{\al_q \bt_r\}) = H(\{\al_q \})+H(\{\bt_r \})$. Finally, from Ref.~\cite{DevWin05} we have $E(X,Z)\geq - H(S|M_1M_2)_{\rho^{(2)}}$, which, combined with \eqref{eqn15}, proves the desired result.\end{proof}

Now consider the perspective of coherent teleportation. Corollary~4 generalizes nicely to the case of mixed devices as follows.  
\begin{corollary}
\label{cor127}
Let $\rho^{(0)}_{M_1} = \sum_j \al_j \dya{j}$ and $\rho^{(0)}_{M_2} = \sum_j \bt_j \dya{j}$ be possibly mixed states, and let $\EC^c$ be the quantum channel from $S$ at time $t_0$ to $M_1M_2$ at time $t_2$. Then:

(a) it holds
\begin{equation}
\label{eqn16}
Q(\EC^c) \geq \log (1/c) - [H(\rho^{(0)}_{M_1}) + H(\rho^{(0)}_{M_2})];
\end{equation}

(b) there exists a recovery map $\RC$ such that the entanglement fidelity of the channel $\RC \circ \EC^c$ is bounded by:
\begin{equation}
\label{eqn17}
F_e(\RC \circ \EC^c) \geq  \frac{1}{\ddd \cdot c} 2^{-[H(\rho^{(0)}_{M_1}) + H(\rho^{(0)}_{M_2})]}.
\end{equation}
\end{corollary}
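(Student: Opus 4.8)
The plan is to mirror the structure of Corollary~\ref{cor3}, replacing each use of the pure-device bound \eqref{eqn2} by its mixed-device counterpart \eqref{eqn14} (equivalently \eqref{eqn6}), and to rely on the same identities connecting the entanglement $E(X,Z)$, the conditional entropy $-H(S|M_1M_2)_{\rho^{(2)}}$, and the properties of the complementary channel $\EC^c$. The key observation, already noted in the proof of Corollary~\ref{thmMixed2505}, is that the reduced state $\rho^{(2)}_{SS'}$ (and hence the channel $\EC \colo S(t_0)\to S(t_2)$) is \emph{completely unaffected} by the mixedness of the devices; only the purifying side $M_1M_2$ changes, and with it the entropy $H(\rho^{(2)}_{M_1M_2})$. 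This is what makes the corrections additive and of exactly the form $H(\rho^{(0)}_{M_1})+H(\rho^{(0)}_{M_2})$.

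For part (a), I would set $\rho^{(0)}_S = \id/\ddd = \Tr_{S'}[\Phi]_{SS'}$ as in Corollary~\ref{cor3}. Then $-H(S|M_1M_2)_{\rho^{(2)}} = H(\rho^{(2)}_{M_1M_2}) - H(\rho^{(2)}_S)$, and the coherent-information lower bound on the quantum capacity~\cite{wilde2011classical} gives $Q(\EC^c) \geq H(\rho^{(2)}_{M_1M_2}) - H(\rho^{(2)}_S)$. With the maximally mixed input, $H(\rho^{(2)}_S) = \log\ddd$ is unchanged by device mixing; meanwhile \eqref{eqn14}, together with the identity $E(X,Z) = -H(S|M_1M_2)_{\rho^{(2)}}$ valid for this class of states, yields $H(\rho^{(2)}_{M_1M_2}) - \log\ddd \geq \log(1/c) - [H(\rho^{(0)}_{M_1})+H(\rho^{(0)}_{M_2})]$, which is \eqref{eqn16}.

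For part (b), I would again invoke the operational meaning of the conditional min-entropy~\cite{KonRenSch09}, which gives $\max_{\RC} F_e(\RC \circ \EC^c) = (1/\ddd)\, 2^{-H_{\min}(S'|M_1M_2)_{\rho^{(2)}}}$ where $S'$ purifies $\rho^{(0)}_S$; since $H_{\min}\leq H$, it suffices to lower-bound $-H(S'|M_1M_2)_{\rho^{(2)}}$. Here $\rho^{(2)}_{S'M_1M_2}$ is pure on $SS'M_1M_2$, so $-H(S'|M_1M_2)_{\rho^{(2)}} = H(\rho^{(2)}_S) - H(\rho^{(2)}_{M_1M_2})$... wait—more directly, purity gives $H(\rho^{(2)}_{S'M_1M_2}) = H(\rho^{(2)}_S)$ and $H(\rho^{(2)}_{M_1M_2}) = H(\rho^{(2)}_{SS'})$, hence $-H(S'|M_1M_2)_{\rho^{(2)}} = H(\rho^{(2)}_{M_1M_2}) - H(\rho^{(2)}_S)$, which equals $-H(S|M_1M_2)_{\rho^{(2)}} = E(X,Z) \geq \log(1/c) - [H(\rho^{(0)}_{M_1})+H(\rho^{(0)}_{M_2})]$ by \eqref{eqn14}. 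Substituting into the min-entropy formula gives $\max_{\RC}F_e(\RC\circ\EC^c) \geq (1/\ddd) 2^{\log(1/c) - [H(\rho^{(0)}_{M_1})+H(\rho^{(0)}_{M_2})]} = \frac{1}{\ddd\cdot c} 2^{-[H(\rho^{(0)}_{M_1})+H(\rho^{(0)}_{M_2})]}$, which is \eqref{eqn17}.

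The main obstacle I anticipate is bookkeeping rather than conceptual: one must check that the identity $E(X,Z) = -H(S|M_1M_2)_{\rho^{(2)}}$ — proved in~\cite{ColesCollapsePRA2012, CorDeOFan11} for the pure-device state — still applies, or can be replaced by the one-sided bound $E(X,Z) \geq -H(S|M_1M_2)_{\rho^{(2)}}$ from~\cite{DevWin05} as used in Corollary~\ref{thmMixed2505}; and that in part (b) the relevant min-entropy statement concerns $S'$ versus $M_1M_2$ with $S'$ purifying the \emph{input} $\rho^{(0)}_S$, not the full state. One should also be careful that the purity argument relating $H(S'|M_1M_2)$ to channel quantities requires the joint state on $SS'M_1M_2$ to be pure, which holds because the device inputs, though mixed, can be purified and the purifying registers traced out only at the end — alternatively, one can run the whole argument with the device-purifying systems absorbed into $S'$, exactly as the proof of Corollary~\ref{thmMixed2505} absorbs the $q,r$ labels. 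Either way the correction term is the classical entropy of $\{\al_q\bt_r\}$, which factorizes additively.
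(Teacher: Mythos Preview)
Your overall plan is right, and it matches the paper's approach. But there is a genuine gap in the execution, and it is precisely the ``bookkeeping'' issue you flag at the end without resolving.

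The claim ``$\rho^{(2)}_{S'M_1M_2}$ is pure on $SS'M_1M_2$'' is \emph{false} when the devices start mixed, and so is the identity $-H(S|M_1M_2)_{\rho^{(2)}}=H(\rho^{(2)}_{M_1M_2})-H(\rho^{(2)}_S)$: with mixed devices $H(\rho^{(2)}_{SM_1M_2})=H(\rho^{(0)}_S)+H(\rho^{(0)}_{M_1})+H(\rho^{(0)}_{M_2})$, not $H(\rho^{(2)}_S)$. Correspondingly, in part (a) the coherent-information lower bound is \emph{not} $H(\rho^{(2)}_{M_1M_2})-H(\rho^{(2)}_S)$: the complementary channel of $\EC^c$ in the mixed-device case is $S\to S M'_1 M'_2$ (where $M'_1,M'_2$ purify the devices), not $S\to S$, so the coherent information is $H(\rho^{(2)}_{M_1M_2})-H(\rho^{(2)}_{SM'_1M'_2})$. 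Your asserted bound $Q(\EC^c)\geq H(\rho^{(2)}_{M_1M_2})-\log d$ is therefore too strong and unjustified; the two errors happen to cancel numerically, but the argument as written does not go through.

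The paper fixes this exactly by introducing $M'_1,M'_2$ and observing that $\rho^{(2)}_{SM'_1M'_2}=\id/d\otimes\rho^{(2)}_{M'_1}\otimes\rho^{(2)}_{M'_2}$ (because the marginal channel on $S$ is independent of the initial standard-basis state of the devices). This gives $H(\rho^{(2)}_{SM'_1M'_2})=H(\rho^{(0)}_{SM_1M_2})=H(\rho^{(2)}_{SM_1M_2})$, hence $Q(\EC^c)\geq -H(S|M_1M_2)_{\rho^{(2)}}$ directly. For part (b) the same identity yields $H(\rho^{(2)}_{S'M_1M_2})=H(\rho^{(2)}_{SM'_1M'_2})=H(\rho^{(2)}_{SM_1M_2})$, so $H(S'|M_1M_2)=H(S|M_1M_2)$ as you want, but via the purifiers rather than a nonexistent purity of $SS'M_1M_2$. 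Finally, you should invoke \eqref{eqn15} directly (it bounds $-H(S|M_1M_2)$), not \eqref{eqn14} together with the identity $E(X,Z)=-H(S|M_1M_2)$, since that equality is only established for pure devices; for mixed devices only the one-sided hashing bound $E_D\geq -H(S|M_1M_2)$ is available, and that inequality points the wrong way for your purpose.
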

\begin{proof}
In proving both (a) and (b), we will invoke the proof of Cor.~\ref{thmMixed2505} and we will set the initial state to $\rho^{(0)}_S = \id /\ddd $. For this input state, with $\EC^c$ and $\EC$ being complementary quantum channels, and letting $M'_1$ and $M'_2$ be systems that purify $\rho^{(0)}_{M_1}$ and $\rho^{(0)}_{M_2}$ respectively, we have
\begin{align}
\label{eqn18}
Q(\EC^c) &\geq H(\EC^c(\id /\ddd ))-H(\EC(\id /\ddd ))\notag\\
&= H(\rho^{(2)}_{M_1M_2}) - H(\rho^{(2)}_{S M'_1M'_2}) \notag \\
&= H(\rho^{(2)}_{M_1M_2}) - H(\rho^{(2)}_{S M_1M_2}) \notag\\
&= -H(S|M_1M_2)_{\rho^{(2)}}
\end{align}
In the third line, we noted that $H(\rho^{(2)}_{S M'_1M'_2}) = H(\rho^{(0)}_{S M_1M_2}) = H(\rho^{(2)}_{S M_1M_2}) $ since $\rho^{(2)}_{S M'_1M'_2} = \id/\ddd  \ot \rho^{(2)}_{M'_1} \ot \rho^{(2)}_{M'_2}$. Finally, combining \eqref{eqn18} with \eqref{eqn15} proves \eqref{eqn16}.

For \eqref{eqn17}, letting $S'$ purify $\rho^{(0)}_{S}$, we write
\begin{align}
\max_{\RC} F_e(\RC \circ \EC^c) &= (1/\ddd ) 2^{-H_{\min}(S'|M_1M_2)_{\rho^{(2)}}}\notag\\
&\geq (1/\ddd ) 2^{-H(S'|M_1M_2)_{\rho^{(2)}}} \notag \\
&= (1/\ddd ) 2^{-H(S|M_1M_2)_{\rho^{(2)}}}  \notag
\end{align}
which gives the result \eqref{eqn17} by invoking \eqref{eqn15}, and in the third line we used $H(\rho^{(2)}_{S' M_1M_2})=H(\rho^{(2)}_{S M'_1M'_2})=H(\rho^{(2)}_{S M_1M_2})$.
\end{proof}

We note that Cor.~\ref{thmMixed2505} and Cor.~\ref{cor127}, respectively, imply Thm.~2 and Cor.~4 from the main text by setting $\rho^{(0)}_{M_1}=\dya{0}$ and $\rho^{(0)}_{M_2}=\dya{0}$.

\section{Multiple measurements}

Here we extend our main result to the case of arbitrarily many measurements, i.e., we prove Eq.~(11) from the main text. Suppose that system $S$, initially at time $t_0$ in state $\rho^{(0)}_S$, interacts sequentially with $n$ measurement devices, which each initially start in state $\ket{0}$. Recall from the main text that time $t_n$ corresponds to the time immediately after the $n$-th measurement device $M_n$, which measures observable $X^n=\{[X^n_j]\}$ of $S$, has interacted with $S$. We denote the entanglement at time $t_n$, between the system $S$ and the measurement devices $M_1\dots M_n$, as $E(X^1,\dots,X^n)$. We first provide a more mathematically detailed proof of Eq.~(10).
\begin{lemma}
\label{thm1}
Consider any entanglement measure $E$ that is either non-increasing under LOCC or is non-increasing on average under LOCC. Then
\begin{equation}
\label{eqn19}
E(X^1,\dots,X^n) \geq E(X^1,\dots,X^{n-1})
\end{equation}
\end{lemma}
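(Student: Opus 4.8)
The plan is to realize the $(n-1)$-th configuration as the image of the $n$-th configuration under a suitable LOCC map, and then invoke monotonicity. First I would observe, following the discussion around Eq.~(9) in the main text, that the interaction of $S$ with the final device $M_n$ can be undone \emph{conditionally} on the classical record stored in $M_n$: since $U_{X^n} = \sum_j [X^n_j] \ot \Sbb^j$ is a controlled-shift unitary with $M_n$ prepared in $\ket{0}$, the state $\rho^{(n)}_{SM_1\dots M_n}$ has the block form $\sum_{j,k} [X^n_j]\,\rho^{(n-1)}_{SM_1\dots M_{n-1}}[X^n_k] \ot \dyad{j}{k}$ acting on the $M_n$ register. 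I would then write down the explicit LOCC protocol: the party holding $M_1\dots M_n$ measures $M_n$ in the standard basis, obtaining outcome $j$ with probability $p_j = \mte{X^n_j}{\rho^{(n-1)}_S}$ (where $\rho^{(n-1)}_S$ is the reduced state of $S$ at time $t_{n-1}$), and classically communicates $j$ to the party holding $S$, who applies the local unitary $\sg_{X^n}^{-j}$ (a generalized Pauli built from $X^n$, as in the ``Decoupling'' paragraph) to $S$. The key computation, which I would state but not grind through, is that conditioned on outcome $j$ this restores the global state on $SM_1\dots M_{n-1}$ exactly to $\rho^{(n-1)}_{SM_1\dots M_{n-1}}$, independently of $j$; this is precisely the random-unitary-channel-plus-record structure cited via~\cite{quantumlostfound, PianiAdessoPRA.85.040301}.

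With this protocol in hand, the rest is a two-line argument splitting on which kind of monotonicity $E$ enjoys. If $E$ is non-increasing under LOCC (as distillable entanglement is), then since the protocol is a valid LOCC map from $\rho^{(n)}_{SM_1\dots M_n}$ to a state whose restriction to $SM_1\dots M_{n-1}$ is $\rho^{(n-1)}_{SM_1\dots M_{n-1}}$ (tensored with classical garbage in $M_n$, which can be discarded — discarding being LOCC), we get $E(X^1,\dots,X^n) \geq E(X^1,\dots,X^{n-1})$ directly. If instead $E$ is only non-increasing \emph{on average} under LOCC, I would note that the measurement of $M_n$ followed by communication produces the ensemble $\{p_j, \rho^{(n-1)}_{SM_1\dots M_{n-1}}\}$ — every outcome yielding the \emph{same} state — so the average final entanglement is $\sum_j p_j E(X^1,\dots,X^{n-1}) = E(X^1,\dots,X^{n-1})$, and on-average monotonicity gives the same conclusion. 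Either way Eq.~\eqref{eqn19} follows.

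The main obstacle I anticipate is not conceptual but bookkeeping: verifying cleanly that the conditional correction unitary $\sg_{X^n}^{-j}$ acting on $S$ really does undo $U_{X^n}$ on \emph{every} branch, i.e., that $\sg_{X^n}^{-j} [X^n_j] \rho [X^n_k] \,\sg_{X^n}^{\,k}$ reassembles into $\rho$ after summing the $M_n$-diagonal blocks — one has to be careful that the cross terms $j \neq k$ are handled by the phases $\om^{j}$ in $\sg_{X^n} = \sum_j \om^j [X^n_j]$ rather than by the (absent, post-measurement) off-diagonal $M_n$ coherences. A secondary subtlety is making sure the statement is phrased for \emph{both} flavors of monotonicity in one stroke, since the paper wants Eq.~\eqref{eqn19} to apply to the whole list of measures in Appendix~\ref{app1}; I would handle this by the explicit case split above, remarking that $E_{\fid}$ and the other relative-entropy-type measures are covered because they too are monotone under LOCC.
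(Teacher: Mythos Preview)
Your LOCC protocol has a genuine error: measuring $M_n$ in the \emph{standard} basis does not work. Conditioned on outcome $j$, the post-measurement state on $SM_1\dots M_{n-1}$ is (up to normalization) $[X^n_j]\,\rho^{(n-1)}_{SM_1\dots M_{n-1}}\,[X^n_j]$, a \emph{projection} of $\rho^{(n-1)}$ onto the $X^n_j$ eigenspace of $S$. No local unitary on $S$ can undo a projection; in particular your proposed correction $\sg_{X^n}^{-j}=\sum_l \om^{-jl}[X^n_l]$ is diagonal in the $X^n$ basis and commutes with $[X^n_j]$, so it leaves the conditional state unchanged. The ``cross terms $j\neq k$'' you worry about in your obstacle paragraph are simply annihilated by the standard-basis measurement, and phases on $S$ cannot resurrect them.

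The fix, and what the paper actually does, is to first rewrite the isometry as $V^n=(1/\sqrt{\ddd})\sum_k U^n_k\ot\ket{q_k}$ with $\{\ket{q_k}\}$ the Fourier-conjugate basis of $M_n$ and $U^n_k=\sum_j\om^{jk}[X^n_j]$ a genuine unitary on $S$; this makes the random-unitary structure explicit. One then measures $M_n$ in the $\{\ket{q_k}\}$ basis: each outcome $k$ occurs with probability $1/\ddd$ and yields the conditional state $U^n_k\,\rho^{(n-1)}_{SM_1\dots M_{n-1}}\,(U^n_k)^\dagger$, which \emph{is} undone by the local unitary $(U^n_k)^\dagger$ on $S$. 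With the correct basis your two-case split on the type of monotonicity is fine and matches the paper's argument.
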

\begin{proof}
The interaction of $S$ with the $n$-th measurement device $M_n$ can be written as an isometry, $V^n: \HC_S \to \HC_{SM_n}$,  as follows:
\begin{align}
\label{eqn20}V^n & = \sum_j [X^n_j] \ot \ket{j}\\
\label{eqn21}& = (1/\sqrt{\ddd }) \sum_k U^n_k \ot \ket{q_k}
\end{align}
where $\{\ket{j}\}$ is the computational basis on $M_n$. The second line rewrites things in terms of the $\{\ket{q_k}\}$ basis, which is related to $\{\ket{j}\}$ by the Fourier transform, and $U^n_k:= \sum_j \om^{jk} [X^n_j]$ is a unitary, with $\om := e^{2 \pi i /d}$. The second line makes it apparent that the interaction results in a random-unitary channel acting on $S$. Thus, if $\rho^{(n-1)}_{SM_1\ldots M_{n-1}}$ is the state at time $t_{n-1}$ then the state at $t_n$ is
\begin{align}
\rho^{(n)}_{SM_1\dots M_{n}}&=V^n \rho^{(n-1)}_{SM_1\dots M_{n-1}}(V^n)\ad \notag\\
&= (1/\ddd ) \sum_{j,k} U^n_j \rho^{(n-1)}_{SM_1\ldots M_{n-1}} (U^n_k)\ad \ot \dyad{q_j}{q_k}\notag
\end{align}
Consider the LOCC operation $\Lm$, where the party possessing $M_n$ measures it in the $\{\ket{q_k}\}$ basis, then maps $M_n$ to the $\ket{0}$ state, and then communicates the measurement result to the party possessing $S$, who undoes the appropriate local unitary (chosen from the set $\{U^n_j\}$) on $S$. The Krauss operators associated with $\Lm$ are $\Lm_j = (U^n_j)\ad \ot \dyad{0}{q_j}$, and the resulting state is 
$$\Lm(\rho^{(n)}_{SM_1\dots M_{n}}) =\sum_j  \Lm_j \rho^{(n)}_{SM_1\dots M_{n}} \Lm_j\ad=\rho^{(n-1)}_{SM_1\ldots M_{n-1}} \ot \dya{0}.$$
This is precisely the state at time $t_{n-1}$, i.e., the state at time $t_{n-1}$ can be obtained from the state at time $t_n$ by applying an LOCC operation between $M_n$ and $S$. This proves \eqref{eqn19} for measures $E$ that are non-increasing under LOCC. The proof if $E$ is non-increasing on average under LOCC follows by the same argument. This is because each member of the ensemble produced by $\Lm$ corresponds to the state at time $t_{n-1}$, i.e.,
$$\ddd  \Lm_j \rho^{(n)}_{SM_1\ldots M_{n}} \Lm_j\ad = \rho^{(n-1)}_{SM_1\ldots M_{n-1}} \ot \dya{0}. $$
Hence the average entanglement of this ensemble is just $E(X^1,\ldots,X^{n-1})$. \end{proof}

Now we are ready to prove Eq.~(11).
\begin{theorem}
\label{thm2}
Let $E$ be any of the entanglement measures listed in Sec.~\ref{app1} of the Appendix, then
\begin{equation}
\label{eqn22}
E(X_1,\dots,X_n) \geq \max_{m< n} \log \frac{1}{c_{m,m+1}}
\end{equation}
where $c_{m,m+1}:=\max_{j,k} |\ip{X^m_j}{X^{m+1}_k}|^2$.
\end{theorem}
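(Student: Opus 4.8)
The plan is to bootstrap from the two-measurement bound, Theorem~\ref{thmMain230}, using the LOCC-monotonicity already recorded in Lemma~\ref{thm1}. Fix any index $m<n$. Peeling off the devices $M_n,M_{n-1},\dots,M_{m+2}$ one at a time, $n-m-1$ applications of \eqref{eqn19} give $E(X^1,\dots,X^n)\geq E(X^1,\dots,X^{m+1})$; so it is enough to prove $E(X^1,\dots,X^{m+1})\geq\log(1/c_{m,m+1})$ and then maximize over $m<n$. Intuitively, the whole chain of $n$ measurements can only do better than the single consecutive pair $(X^m,X^{m+1})$, and for that pair we re-run the first proof of Theorem~\ref{thmMain230} while carrying the earlier devices $M_1\ldots M_{m-1}$ along as extra quantum memory.

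For the bound on the pair, let $S'$ purify the initial state $\rho^{(0)}_S$. Since the interactions $V^1,\dots,V^m$ are isometries on $S$ (cf.\ \eqref{eqn20}), the state $\rho^{(m)}_{SM_1\dots M_mS'}$ at time $t_m$ is pure, so $S'$ purifies $SM_1\dots M_m$. Apply the uncertainty principle with quantum memory \cite{BertaEtAl} at time $t_m$, taking $S$ as the measured system, $M_1\dots M_m$ as the memory for $X^m$, and $S'$ as the memory for $X^{m+1}$:
\begin{equation}
H(X^m\,|\,M_1\dots M_m)_{\rho^{(m)}} + H(X^{m+1}\,|\,S')_{\rho^{(m)}} \geq \log(1/c_{m,m+1}).
\end{equation}
The first term vanishes because $M_m$ has already recorded $X^m$, exactly as in the proof of Theorem~\ref{thmMain230}. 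For the second term, observe that interacting with $M_{m+1}$ and then discarding it implements the dephasing $\rho\mapsto\sum_k[X^{m+1}_k]\rho[X^{m+1}_k]$ on $S$, so $\rho^{(m+1)}_{SS'}$ is exactly the post-measurement state defining $H(X^{m+1}|S')_{\rho^{(m)}}$; together with purity of $SM_1\dots M_{m+1}S'$ at time $t_{m+1}$ this gives $H(X^{m+1}|S')_{\rho^{(m)}}=-H(S\,|\,M_1\dots M_{m+1})_{\rho^{(m+1)}}$. The hashing bound $E_D\geq -H(S|M_1\dots M_{m+1})_{\rho^{(m+1)}}$ \cite{DevWin05} then yields $E_D(X^1,\dots,X^{m+1})\geq\log(1/c_{m,m+1})$. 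Maximizing over $m<n$ proves \eqref{eqn22} for $E_D$, and hence for $K$, $E_F$, $E_C$, $E_{\text{sq}}$, $E_R$, $E_{R,\infty}$, $E_{\max}$, all of which dominate $E_D$. For $E_{\fid}$ I would run the same argument with the min-/max-entropy uncertainty relation \cite{TomRen2010} replacing \cite{BertaEtAl}, using $H_{\max}(X^m|M_1\dots M_m)_{\rho^{(m)}}=0$ and the identification of $H_{\min}(X^{m+1}|S')_{\rho^{(m)}}$ with $E_{\fid}(X^1,\dots,X^{m+1})$ from \cite{ColesDecDisc2012,ColesCollapsePRA2012}, just as in Sec.~\ref{app1}.

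I expect the only nontrivial point to be the entropic identity $H(X^{m+1}|S')_{\rho^{(m)}}=-H(S|M_1\dots M_{m+1})_{\rho^{(m+1)}}$ in the presence of the extra correlations between $S$ and the earlier devices $M_1\dots M_{m-1}$: in the $n=2$ case $S$ is purified by $S'$ alone, whereas here it is purified jointly by $M_1\dots M_mS'$. The resolution is that the ``interact with $M_{m+1}$, then discard it $=$ dephase $S$ in the $X^{m+1}$ basis'' description holds regardless of what $S$ is correlated with, and global purity at $t_{m+1}$ is automatic from the isometric measurement model; so the identity goes through verbatim and, crucially, one needs only the inequality $E_D\geq -H(S|M_1\dots M_{m+1})$, i.e.\ the hashing bound, which holds for every state. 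A secondary bookkeeping point is to check that Lemma~\ref{thm1} applies to each measure in the list of Sec.~\ref{app1}; this is immediate, since all of them are non-increasing under LOCC, or on average under LOCC, which is exactly the hypothesis of Lemma~\ref{thm1}.
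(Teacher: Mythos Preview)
Your proof is correct and follows essentially the same approach as the paper: reduce via Lemma~\ref{thm1} to the bound $E(X^1,\dots,X^{m+1})\geq\log(1/c_{m,m+1})$, then apply the entropic uncertainty relation with quantum memory at time $t_m$, use $H(X^m|M_1\dots M_m)=0$, and identify the remaining conditional entropy with the entanglement. The only organizational difference is that the paper argues once for $E_{\fid}$ and then invokes $E_D\geq E_{\fid}$ for premeasurement states to cover the whole list, whereas you run the von Neumann version for $E_D$ (via the hashing bound) and treat $E_{\fid}$ separately; both routes are valid.
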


\begin{proof}
The state at time $t_n$ falls into a class of states called premeasurement states for which the entanglement equals minus the conditional entropy \cite{ColesCollapsePRA2012}. More precisely, for a premeasurement state $\rho_{AB}$, we have $E_D(\rho_{AB}) = - H(A|B)_{\rho}$ and $E_{\fid}(\rho_{AB}) = - H_{\max}(A|B)_{\rho}$, but since the max entropy upper bounds the von Neumann entropy $H\leq H_{\max}$ \cite{TomColRen09}, this implies that $E_D \geq E_{\fid}$ for the state of interest. Since $E_D$ in turn lower bounds the other measures defined in Sec.~\ref{app1}, it suffices to prove \eqref{eqn22} for the measure $E_{\fid}$.

Note that Lemma~\ref{thm1} holds for $E_{\fid}$ since $E_{\fid}$ is non-increasing under LOCC. The proof of \eqref{eqn22} would then follow by combining Lemma~\ref{thm1} with
\begin{equation}
\label{eqn23}
E(X_1,\ldots,X_{m+1}) \geq \log \frac{1}{c_{m,m+1}},
\end{equation}
since Lemma~\ref{thm1} would allow us to apply \eqref{eqn23} iteratively to each value of $m$ ranging from $m=1$ to $m=n-1$. So we just need to prove \eqref{eqn23} for $E_{\fid}$. To do this, we apply the uncertainty relation for the min and max entropies \cite{TomRen2010} at time $t_m$, giving
\begin{align}
H_{\max}(X^{m}|M_1\ldots M_{m})_{\rho^{(m)}}&\notag\\
+ H_{\min}(X^{m+1}|S')_{\rho^{(m)}}& \geq \log \frac{1}{c_{m,m+1}},\notag
\end{align}
where we let $S'$ be a system that purifies the initial state $\rho^{(0)}_S$. At time $t_{m}$, the $X^{m}$ information is perfectly contained in the $M_{m}$ system, so $H_{\max}(X^{m}|M_1\ldots M_{m})_{\rho^{(m)}}=0$. Also, from Ref.~\cite{ColesCollapsePRA2012}, we have that $H_{\min}(X^{m+1}|S')_{\rho^{(m)}}$ is equal to $E(X_1,\ldots,X_{m+1})$ provided that the entanglement is measured here with $E_{\fid}$. Thus, the result is proven.\end{proof}

\end{document}